\newtheorem{example}{Example}
\newcommand{\code}[1]{\texttt{#1}}
\DeclareMathOperator*{\argmax}{arg\,max}
\DeclareMathOperator\supp{supp}
\newcommand{\PF}[1]{$\smash{\text{PF}_{#1}}$}
\newcommand{\GPI}[1]{$\smash{\text{GPI}_{#1}}$}
\newcommand{\PI}[1]{$\smash{\text{PI}_{#1}}$}
\title{Perceptual Fairness in Image Restoration}
\author{%
  Guy Ohayon \\
  Faculty of Computer Science \\
  Technion--Israel Institute of Technology\\
  \texttt{ohayonguy@cs.technion.ac.il} \\
  \AND
  Michael Elad \\
  Faculty of Computer Science \\
  Technion--Israel Institute of Technology\\
  \texttt{elad@cs.technion.ac.il} \\
  \And
  Tomer Michaeli \\
  Faculty of Electrical and Computer Engineering \\
  Technion--Israel Institute of Technology\\
  \texttt{tomer.m@ee.technion.ac.il}
}
\begin{document}

\maketitle

\begin{abstract}
Fairness in image restoration tasks is the desire to treat different sub-groups of images equally well.
Existing definitions of fairness in image restoration are highly restrictive. They consider a reconstruction to be a correct outcome for a group (\eg, women) \emph{only} if it falls within the group's set of ground truth images (\eg, natural images of women); otherwise, it is considered \emph{entirely} incorrect. Consequently, such definitions are prone to controversy, as errors in image restoration can manifest in various ways.
In this work we offer an alternative approach towards fairness in image restoration, by considering the \emph{Group Perceptual Index} (GPI), which we define as the statistical distance between the distribution of the group's ground truth images and the distribution of their reconstructions. 
We assess the fairness of an algorithm by comparing the GPI of different groups, and say that it achieves perfect \emph{Perceptual Fairness} (PF) if the GPIs of all groups are identical.
We motivate and theoretically study our new notion of fairness, draw its connection to previous ones, and demonstrate its utility on state-of-the-art face image restoration algorithms.
\end{abstract}

\section{Introduction}\label{section:intro}
Tremendous efforts have been dedicated to understanding, formalizing, and mitigating fairness issues in various tasks, including classification~\cite{NIPS2016_9d268236,fairness-awareness,pmlr-v28-zemel13,pmlr-v54-zafar17a,corbettdavies2023measure,10.1145/3194770.3194776}, regression~\cite{pmlr-v97-agarwal19d,pmlr-v80-komiyama18a,berk2017convex,6729491,berk2017fairness,10.1007/978-3-319-71249-9_21}, clustering~\cite{NIPS2017_978fce5b,NEURIPS2019_fc192b0c,schmidt2021fair,pmlr-v97-backurs19a,bercea2018cost,rosner2018privacy}, recommendation~\cite{Geyik2019FairnessAwareRI,10.1145/3085504.3085526,10.1145/3269206.3272027,10.1145/3442381.3449866,10.1145/3437963.3441824}, and generative modeling~\cite{friedrich2023FairDiffusion,shen2024finetuning,DBLP:conf/aaai/Choi0K0P24,zhang2023inclusive,NIPS2017_a486cd07,NIPS2017_b8b9c74a,Seth_2023_CVPR}.
Fairness definitions remain largely controversial, yet broadly speaking, they typically advocate for independence (or conditional independence) between sensitive attributes (ethnicity, gender, \etc) and the predictions of an algorithm.
In classification tasks, for instance, the input data carries sensitive attributes, which are often required to be statistically independent of the predictions (\eg, deciding whether to grant a loan should not be influenced by the applicant's gender).
Similarly, in text-to-image generation, fairness often advocates for statistical independence between the sensitive attributes of the generated images and the text instruction used~\cite{friedrich2023FairDiffusion}.
For instance, the prompt \code{``An image of a firefighter''} should result in images featuring people of various genders, ethnicities, \etc.

While fairness is commonly associated with the desire to \emph{eliminate} the dependencies between sensitive attributes and the predictions, fairness in image restoration tasks (\eg, denoising, super-resolution) has a fundamentally different meaning.
In image restoration, \emph{both} the input and the output carry sensitive attributes, and the goal is to \emph{preserve} the attributes of different groups equally well~\cite{pmlr-v139-jalal21b}.
But what exactly constitutes such a preservation of sensitive attributes? Let us denote by $x$, $y$, and $\hat{x}$ the unobserved source image, its degraded version (\eg, noisy, blurry), and the reconstruction of $x$ from~$y$, respectively. Additionally, let $\mathcal{X}_{a}$ denote the set of images $x$ carrying the sensitive attributes~$a$.
\citet{pmlr-v139-jalal21b} deem the reconstruction of any $x\in\mathcal{X}_{a}$ as correct only if $\hat{x}\in \mathcal{X}_{a}$.
This allows practitioners to evaluate fairness in an intuitive way, by classifying the reconstructed images produced for different groups.
For instance, regarding $x$, $y$, and $\hat{x}$ as realizations of random vectors $X$, $Y$, and~$\smash{\hat{X}}$, respectively, Representation Demographic Parity (RDP) states that $\smash{\mathbb{P}(\hat{X}\in\mathcal{X}_{a}|X\in\mathcal{X}_{a})}$ should be the same for all~$a$, and Proportional Representation (PR) states that $\smash{\mathbb{P}(\hat{X}\in\mathcal{X}_{a})=\mathbb{P}(X\in\mathcal{X}_{a})}$ should hold for every~$a$.
However, the idea that a reconstructed image $\hat{x}$ can either be an \emph{entirely correct} output ($\hat{x}\in\mathcal{X}_{a}$) or an \emph{entirely incorrect} output ($\hat{x}\notin\mathcal{X}_{a}$) is highly limiting, as errors in image restoration can manifest in many different ways.
Indeed, what if one algorithm always produces blank images given inputs from a specific group, and another algorithm produces images that are ``almost'' in $\mathcal{X}_{a}$ for such inputs (\eg, each output is only close to some image in $\mathcal{X}_{a}$)?
Should both algorithms be considered equally (and completely) erroneous for that group?
Furthermore, quantities of the form $\smash{\mathbb{P}(\hat{X}\in\mathcal{X}_{a}|\cdot)}$ completely neglect the \emph{distribution} of the images within $\mathcal{X}_{a}$.
For example, assuming the groups are women and non-women, an algorithm that always outputs the same image of a woman when the source image is a woman, but produces diverse non-women images when the source is not a woman, still satisfies RDP. Does this algorithm truly treat women fairly?

\begin{figure}
    \centering
\includegraphics[width=1\linewidth]{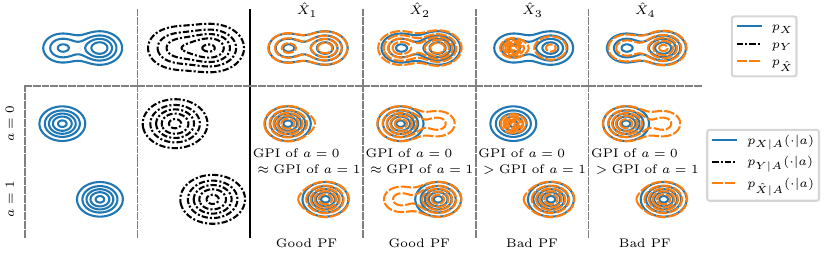}
    \caption{Illustrative example of the proposed notion of Perceptual Fairness (PF). This figure presents four possible restoration algorithms exhibiting  different behaviors and fairness performance.
    In this example, the sensitive attribute $A$ takes the values $0$ or $1$ with probabilities $\smash{P(A=0)<P(A=1)}$. The distributions 
    $p_{X}$ and $p_{Y}$ correspond to the ground truth signals (\eg, natural images) and their degraded measurements (\eg, noisy images), respectively.
    The distribution $\smash{p_{X|A}(\cdot|a)}$ corresponds to the ground truth signals associated with the attribute value $a$, and $\smash{p_{Y|A}(\cdot|a)}$ is the distribution of their degraded measurements. 
    The distribution of all reconstructions is denoted by $\smash{p_{\hat{X}}}$, and $\smash{p_{\hat{X}|A}(\cdot|a)}$ is the distribution of the reconstructions associated with attribute value $a$.
    The Group Perceptual Index (GPI) of the group associated with $a$ is defined as the statistical distance between $\smash{p_{\hat{X}|A}(\cdot|a)}$ and $\smash{p_{X|A}(\cdot|a)}$, and good PF is achieved when the GPIs of all groups are (roughly) similar.
    For example, $\hat{X}_{1}$ achieves good PF since the GPIs of both $a=0$ and $a=1$ are roughly equal, while $\hat{X}_{3}$ achieves poor PF since the GPI of $a=0$ is worse (larger) than that of $a=1$.
    See~\cref{section:problem-formulation} for more details.}
    \label{fig:problem-formulation}
\end{figure}
\begin{figure}
    \centering
    \includegraphics[width=1\linewidth]{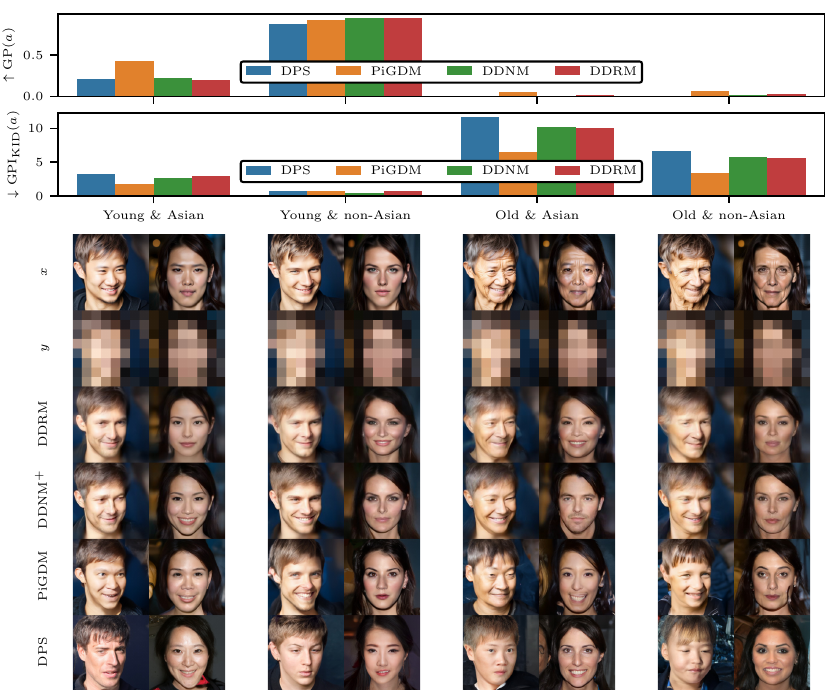}
    \caption{Examining fairness in face image super-resolution techniques through the lens of RDP~\cite{pmlr-v139-jalal21b} or PF (our proposed notion of fairness).
    Both RDP and PF assess how well an algorithm treats different fairness groups.
    Specifically, RDP evaluates the parity in the GP of different groups (higher GP is better), and PF evaluates the parity in the GPI of different groups (lower GPI is better).
    The results show that the groups old\&Asian and old\&non-Asian attain similar treatment according to RDP (similar GP scores that are roughly zero), while the latter group attains better treatment according to PF. In~\cref{section:experiments,appendix:disentangle_age_and_ethnicity}, we show why this outcome of PF is the desired one.}
    \label{fig:vis-and-quantitative-gp-kid}
\end{figure}
To address these controversies, we propose to examine how the restoration method affects the \emph{distribution} of each group of interest (\eg, the distribution of images of women or non-women).
Specifically, we define the \emph{Group Perceptual Index} (GPI) to be the statistical distance (\eg, Wasserstein) between the distribution of the group’s ground truth images and the distribution of their reconstructions. We then associate \emph{Perceptual Fairness} (PF) with the degree to which the GPIs of the different groups are close to one another. In other words, the PF of an algorithm corresponds to the parity among the GPIs of the groups of interest (see~\cref{fig:problem-formulation} for intuition).
The rationale behind using such an index is two-fold.
First, it solves the aforementioned controversies. For example, an algorithm that always outputs the same image of a woman when the source image is a woman, and diverse non-women images otherwise, would achieve poor GPI for women and good GPI for non-women, thus resulting in poor PF.
Second, the GPI reflects the ability of humans to distinguish between samples of a group's ground truth images and samples of the reconstructions obtained from the degraded images of that group~\cite{Blau2018}.
Thus, achieving good PF (\ie, parity in the GPIs) suggests that this ability is the same for all groups.

This paper is structured as follows.
In~\cref{section:problem-formulation} we formulate the image restoration task and present the mathematical notations necessary for this paper.
This includes a review of prior fairness definitions in image restoration, alongside our proposed definition.
We also discuss why PF can be considered as a generalization of RDP.
In~\cref{section:theorems} we present our theoretical findings.
For instance, we prove that achieving perfect GPI for all groups simultaneously is not feasible when the degradation is sufficiently severe. We also establish an interesting (and perhaps counter-intuitive) relationship between the GPI of different groups for algorithms attaining a perfect Perceptual Index (PI)~\cite{Blau2018}, and show that PF and the PI are often at odds with each other.
In~\cref{section:experiments} we demonstrate the practical advantages of PF over RDP.
In particular, we show that PF detects bias in cases where RDP fails to do so.
Lastly, in~\cref{section:discussion} we discuss the limitations of this work and propose ideas for the future.

\section{Problem formulation and preliminaries}\label{section:problem-formulation}
We adopt the Bayesian perspective of inverse problems, where an image $x$ is regarded as a realization of a random vector $X$ with probability density function $p_{X}$.
Consequently, an input $y$ is a realization of a random vector $Y$ (\eg, a noisy version of $X$), which is related to $X$ via the conditional probability density function $p_{Y|X}$.
The task of an estimator $\smash{\hat{X}}$ (in this paper, an image restoration algorithm) is to estimate $X$ \emph{only} from $Y$, such that $\smash{X\rightarrow Y\rightarrow \hat{X}}$ is a Markov chain ($X$ and $\smash{\hat{X}}$ are statistically independent given $Y$).
Given an input $y$, the estimator $\hat{X}$ generates outputs according to the conditional density $p_{\hat{X}|Y}(\cdot|y)$.

\subsection{Perceptual index}\label{section:pi}
A common way to evaluate the quality of images produced by an image restoration algorithm is to assess the ability of humans to distinguish between samples of ground truth images and samples of the algorithm's outputs. This is typically done by conducting experiments where human observers vote on whether the generated images are real or fake~\cite{pix2pix2017,zhang2016colorful,NIPS2016_8a3363ab,NIPS2015_aa169b49,Dahl_2017_ICCV,IizukaSIGGRAPH2016,zhang2017real,DBLP:conf/bmvc/GuadarramaDBS0017}. 
Importantly, this ability can be quantified by the \emph{Perceptual Index}~\cite{Blau2018}, which is the statistical distance between the distribution of the source images and the distribution of the reconstructed ones,
\begin{gather}\label{eq:PI_def}
\text{PI}_{d}\coloneqq d(p_{X},p_{\hat{X}}),
\end{gather}
where $d(\cdot,\cdot)$ is some divergence between distributions (Kullback–Leibler divergence, total variation distance, Wassersterin distance, \etc).
\subsection{Fairness}
\subsubsection{Previous notions of fairness}\label{section:previous-notions}
\citet{pmlr-v139-jalal21b} introduced three pioneering notions of fairness for image restoration algorithms: Representation Demographic Parity (RDP), Proportional Representation (PR), and Conditional Proportional Representation (CPR).
Formally, given a collection of sets of images $\smash{\{\mathcal{X}_{a_{i}}\}_{i=1}^{k}}$, where $a_{i}$ is a vector of sensitive attributes and each $\mathcal{X}_{a_{i}}$ represents the group carrying the sensitive attributes $a_{i}$, these notions are defined by
\begin{align}
    &\text{RDP:}\: \mathbb{P}(\hat{X}\in \mathcal{X}_{a_{i}}|X\in \mathcal{X}_{a_{i}})=\mathbb{P}(\hat{X}\in \mathcal{X}_{a_{j}}|X\in \mathcal{X}_{a_{j}})\:\text{for every }i,j;\\
    &\text{PR:}\: \mathbb{P}(\hat{X}\in \mathcal{X}_{a_{i}})=\mathbb{P}(X\in \mathcal{X}_{a_{i}})\:\text{for every }i;\\
    &\text{CPR:}\: \mathbb{P}(\hat{X}\in \mathcal{X}_{a_{i}}|Y=y)=\mathbb{P}(X\in \mathcal{X}_{a_{i}}|Y=y)\:\text{for every }i,y.
\end{align}
While such definitions are intuitive and practically appealing, they have several limitations.
First, any reconstruction that falls even ``slightly off'' the set $\mathcal{X}_{a_{i}}$ is considered an entirely wrong outcome for its corresponding group.
In other words, reconstructions with minor errors are treated the same as completely wrong ones.
Second, these definitions neglect the \emph{distribution} of the groups' images.
Consequently, an algorithm can satisfy RDP, PR, CPR, \etc, while treating some groups much worse than others in terms of the \emph{statistics} of the reconstructed images.
For instance, consider dogs and cats as the two fairness groups.
Let $\smash{\mathcal{X}_{\text{dogs}}}$ and $\smash{\mathcal{X}_{\text{cats}}}$ be the sets of images of dogs and cats, respectively, and let $\smash{x_{\text{dog}}\in\mathcal{X}_{\text{dogs}}}$ be a particular image of a dog.
Furthermore, suppose that the species can be perfectly identified from any degraded measurement, \ie,
\begin{align}
    \mathbb{P}(X\in \mathcal{X}_{\text{dogs}}|Y=y)=1\text{ or }\mathbb{P}(X\in \mathcal{X}_{\text{cats}}|Y=y)=1
\end{align}
for every $y$.
Now, suppose that $\smash{\hat{X}}$ always produces the image $x_{\text{dog}}$ from any degraded dog image, while generating diverse, high-quality cat images from any degraded cat image.
Namely, for every $y$, we have 
\begin{align}
&1=\mathbb{P}(\hat{X}=x_{\text{dog}}|X\in\mathcal{X}_{\text{dogs}})=\mathbb{P}(\hat{X}\in\mathcal{X}_{\text{dogs}}|X\in\mathcal{X}_{\text{dogs}})=\mathbb{P}(\hat{X}\in\mathcal{X}_{\text{cats}}|X\in\mathcal{X}_{\text{cats}}),\label{eq:rdp-sat}\\
&\mathbb{P}(\hat{X}=x_{\text{dog}}|Y=y)=\mathbb{P}(\hat{X}\in\mathcal{X}_{\text{dogs}}|Y=y)=\mathbb{P}(X=\mathcal{X}_{\text{dogs}}|Y=y),\label{eq:cpr-sat1}\\
&\mathbb{P}(\hat{X}\in\mathcal{X}_{\text{cats}}|Y=y)=\mathbb{P}(X=\mathcal{X}_{\text{cats}}|Y=y)\label{eq:cpr-sat2}.
\end{align}
Although this algorithm satisfies RDP (\cref{eq:rdp-sat}) and CPR (\cref{eq:cpr-sat1,eq:cpr-sat2}), which entails PR~\cite{pmlr-v139-jalal21b}, it is clearly useless for dogs.
Should such an algorithm really be deemed as fair, then?

To address such controversies, we propose to represent each group by the \emph{distribution} of their images, and measure the representation error of a group by the extent to which an algorithm ``preserves'' such a distribution.
This requires a more general formulation of fairness groups, which is provided next.
\subsubsection{Rethinking fairness groups}
We denote by $A$ (a random vector) the sensitive attributes of the degraded measurement $Y$, so that $p_{Y|A}(\cdot|a)$ is the distribution of degraded images associated with the attributes $A=a$ (\eg, the distribution of noisy women images).
Consequently, the distribution of the ground truth images that possess the sensitive attributes $a$ is given by $p_{X|A}(\cdot|a)$, and the distribution of their reconstructions is given by $\smash{p_{\hat{X}|A}(\cdot|a)}$.
Moreover, we assume that $\smash{A\rightarrow Y\rightarrow \hat{X}}$ forms a Markov chain, implying that knowing $A$ does not affect the reconstructions when $Y$ is given.
This assumption is not limiting, since image restoration algorithms are mostly designed to estimate $X$ solely from $Y$, without taking the sensitive attributes as an additional input.
See~\cref{fig:problem-formulation} for an illustrative example of the proposed formulation.

Note that such a formulation is quite general, as it does not make any assumptions regarding the nature of the image distributions, whether they have overlapping supports or not, \etc.
Our formulation also generalizes the previous notion of fairness groups, which considers only the support of $p_{X|A}(\cdot|a)$ for every $a$.
Indeed, one can think of $\smash{\mathcal{X}_{a}=\supp{p_{X|A}(\cdot|a)}}$ as the set of images corresponding to some group, and of $\smash{\{\mathcal{X}_{a}\}_{a\in\supp{p_{A}}}}$ as the collection of all sets.
Furthermore, notice that $A$ can also be the degraded measurement itself, \ie $A=Y$. In this case, $\smash{p_{X|A}(\cdot|a)=p_{X|Y}(\cdot|a)}$ is the posterior distribution of ground truth images given the measurement $a$, and $\smash{p_{\hat{X}|A}(\cdot|a)=p_{\hat{X}|Y}(\cdot|a)}$ is the distribution of the reconstructions of the measurement $a$.
Namely, our mathematical formulation is adaptive to the granularity of fairness groups considered.
\subsubsection{Perceptual fairness}
We define the fairness of an image restoration algorithm as its ability to equally preserve the distribution $\smash{p_{X|A}(\cdot|a)}$ across all possible values of $a$.
Formally, we measure the extent to which an algorithm $\smash{\hat{X}}$ preserves this distribution by the \emph{Group Perceptual Index}, defined as
\begin{align}
\text{GPI}_{d}(a)\coloneqq d(p_{X|A}(\cdot|a),p_{\hat{X}|A}(\cdot|a)),\label{eq:gpi}
\end{align}
where $\smash{d(\cdot,\cdot)}$ is some divergence between distributions.
Then, we say that $\hat{X}$ achieves perfect \emph{Perceptual Fairness} with respect to $d$, or perfect \PF{d} in short, if
\begin{align}
    \text{GPI}_{d}(a_{1})=\text{GPI}_{d}(a_{2})
\end{align}
for every $a_{1},a_{2}\in\supp{p_{A}}$ (see~\cref{fig:problem-formulation} to gain intuition).
In practice, algorithms may rarely achieve exactly perfect \PF{d}, while the \GPI{d} of different groups may still be roughly equal.
In such cases, we say that $\smash{\hat{X}}$ achieves good \PF{d}.
In contrast, if there exists at least one group that attains far worse \GPI{d} than some other group, we say that $\smash{\hat{X}}$ achieves poor/bad \PF{d}.
Importantly, note that achieving good \PF{d} does not necessarily indicate good \PI{d} and/or good \GPI{d} values.

\subsubsection{Group Precision, Group Recall, and connection to RDP}\label{section:group-precision-and-recall}
In addition to the \PI{d} defined in \eqref{eq:PI_def}, the performance of image restoration algorithms is often measured via the following complementary measures~\cite{precision_recall_distributions,NEURIPS2019_0234c510}: (1) \emph{Precision}, which is the probability that a sample from $p_{\hat{X}}$ falls within the support of $p_{X}$, $\smash{\mathbb{P}(\hat{X}\in\supp{p_{X}})}$, and (2) \emph{Recall}, which is the probability that a sample from $p_{X}$ falls within the support of $p_{\hat{X}}$, $\smash{\mathbb{P}(X\in\supp{p_{\hat{X}}})}$.
Achieving low precision implies that the reconstructed images may not always appear as valid samples from $p_{X}$.
Thus, precision reflects the perceptual \emph{quality} of the reconstructed images.
Achieving low recall implies that some portions of the support of $p_{X}$ may never be generated as outputs by $\smash{\hat{X}}$.
Hence, recall reflects the perceptual \emph{variation} of the reconstructed images.

Since here we are interested in the perceptual quality and the perceptual variation of a \emph{group's} reconstructions, let us define the \emph{Group Precision} and the \emph{Group Recall} by 
\begin{align}
    &\text{GP}(a)\coloneqq \mathbb{P}(\hat{X}\in\mathcal{X}_{a}|A=a),\label{eq:gp}\\
    &\text{GR}(a)\coloneqq \mathbb{P}(X\in\hat{\mathcal{X}}_{a}|A=a),
\end{align}
where $\smash{\mathcal{X}_{a}=\supp{p_{X|A}(\cdot|a)}}$ and $\smash{\hat{\mathcal{X}}_{a}=\supp{p_{\hat{X}|A}(\cdot|a)}}$.
Hence, when adopting our formulation of fairness groups, satisfying RDP simply means that the GP values of all groups are the same.
However, as hinted in previous sections, two groups with similar GP values may still differ significantly in their GR.
From the following theorem, we conclude that attaining perfect \PF{d_{\text{TV}}}, where $\smash{d_{\text{TV}}(p,q)=\frac{1}{2}\int |p(x)-q(x)|dx}$ is the total variation distance between distributions, guarantees that \emph{both} the GP and the GR of all groups have a \emph{common lower bound}.
This implies that \PF{d_{\text{TV}}} can be considered as a generalization of RDP.
\begin{restatable}{theorem}{hitratebound}
\label{theorem:hitratebound}
The Group Precision and Group Recall of any restoration method satisfy
\begin{align}
     &\text{GP}(a)\geq 1-\text{GPI}_{d_{\text{TV}}}(a),\\
     &\text{GR}(a)\geq 1-\text{GPI}_{d_{\text{TV}}}(a),
\end{align}
for all $a\in\supp{p_{A}}$.
\end{restatable}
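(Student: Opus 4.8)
The plan is to reduce both inequalities to the standard variational (set-based) characterization of the total variation distance. Recall that for any two densities $p$ and $q$, with associated probability measures $P$ and $Q$, and any measurable set $B$, one has $|P(B)-Q(B)|\le d_{\text{TV}}(p,q)$; this follows from the identity $d_{\text{TV}}(p,q)=\sup_{B}|P(B)-Q(B)|$, where the supremum is attained on $\{x:p(x)\ge q(x)\}$. Throughout I would write $p=p_{X|A}(\cdot|a)$ and $q=p_{\hat{X}|A}(\cdot|a)$, so that $\text{GPI}_{d_{\text{TV}}}(a)=d_{\text{TV}}(p,q)$, and I recall that by definition $p$ is supported on $\mathcal{X}_{a}$ while $q$ is supported on $\hat{\mathcal{X}}_{a}$.

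For the Group Precision bound I would apply the characterization to $B=\mathcal{X}_{a}^{c}$, the complement of the ground-truth support. Since $p$ vanishes off $\mathcal{X}_{a}$, the $p$-mass of this set is zero, $P(\mathcal{X}_{a}^{c})=0$, whereas its $q$-mass equals $Q(\mathcal{X}_{a}^{c})=1-\mathbb{P}(\hat{X}\in\mathcal{X}_{a}|A=a)=1-\text{GP}(a)$. The inequality $|P(\mathcal{X}_{a}^{c})-Q(\mathcal{X}_{a}^{c})|\le d_{\text{TV}}(p,q)$ then reads $1-\text{GP}(a)\le\text{GPI}_{d_{\text{TV}}}(a)$, which rearranges to the claimed lower bound on $\text{GP}(a)$.

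The Group Recall bound follows by the symmetric argument with the roles of $p$ and $q$ interchanged: apply the characterization to $B=\hat{\mathcal{X}}_{a}^{c}$. Now it is $q$ that vanishes off its support, so $Q(\hat{\mathcal{X}}_{a}^{c})=0$ while $P(\hat{\mathcal{X}}_{a}^{c})=1-\text{GR}(a)$, and the same inequality yields $1-\text{GR}(a)\le\text{GPI}_{d_{\text{TV}}}(a)$.

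The argument is short, and the only point requiring care is to use the set-based form of the total variation distance rather than bounding the relevant integral directly: a naive estimate $1-\text{GP}(a)=\int_{\mathcal{X}_{a}^{c}}q\le\int|p-q|$ would produce a spurious factor of two and give only $\text{GP}(a)\ge 1-2\,\text{GPI}_{d_{\text{TV}}}(a)$. The improvement to the sharp constant comes from observing that on $\mathcal{X}_{a}^{c}$ we have $p=0\le q$, so this region lies entirely in $\{q\ge p\}$ and hence contributes $\int_{\mathcal{X}_{a}^{c}}(q-p)\le\frac{1}{2}\int|p-q|=d_{\text{TV}}(p,q)$; the variational characterization packages exactly this observation.
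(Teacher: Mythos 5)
Your proof is correct, and it reaches the bound by a slightly different route than the paper. You invoke the set-based variational characterization $d_{\text{TV}}(p,q)=\sup_{B}|P(B)-Q(B)|$ as a black-box lemma and apply it to the single well-chosen set $B=\mathcal{X}_{a}^{c}$ (respectively $\hat{\mathcal{X}}_{a}^{c}$), on which one of the two densities vanishes. The paper instead proves everything from scratch: it lower-bounds $\int_{\mathcal{X}_{a}}p_{\hat{X}|A}(x|a)\,dx$ by $\int\min\{p_{X|A}(x|a),p_{\hat{X}|A}(x|a)\}\,dx$ (using that the minimum vanishes off $\supp p_{X|A}(\cdot|a)$) and then evaluates this integral via the pointwise identity $\min\{u,v\}=\tfrac{1}{2}(u+v-|u-v|)$, obtaining $1-d_{\text{TV}}$ directly; the recall bound follows by swapping the roles of the two densities, exactly as in your symmetric argument. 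The two arguments are mathematically equivalent --- the variational characterization you cite is itself proved by the same $\min$ identity --- so the difference is one of packaging: yours is shorter if the reader accepts the standard characterization of total variation, while the paper's is fully self-contained. Your closing remark correctly identifies why a naive bound $\int_{\mathcal{X}_{a}^{c}}q\le\int|p-q|$ loses a factor of two and how the observation $p\equiv 0$ on $\mathcal{X}_{a}^{c}$ recovers the sharp constant; this is precisely the role played in the paper's proof by restricting to $\min\{p,q\}$.
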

Although using $\smash{d_{\text{TV}}(\cdot,\cdot)}$ provides a straightforward relationship between \PF{d_{\text{TV}}} and RDP, other types of divergences may not necessarily indicate GP and GR so explicitly.
The perceptual quality \& variation of a group's reconstructions may be defined in many different ways~\cite{precision_recall_distributions}, and the GPI implicitly entangles these two desired properties.

The mathematical notations and fairness definitions are summarized in~\cref{appendix:summary-of-notations}.
To further develop our understanding of PF, the next section presents several introductory theorems.
\section{Theoretical results}\label{section:theorems}
Image restoration algorithms can generally be categorized into three groups: (1) Algorithms targeting the best possible average distortion (\eg, good PSNR)~\cite{zhang2017beyond,zhang2020plug,edsr,liang2021swinir,wang2018esrgan,wang2021realesrgan,zhang2021designing,ahn2018fast,dong2014image,zhang2017learning}, (2) algorithms that strive to achieve good average distortion but prioritize attaining best PI~\cite{ohayon2024pmrf,delbracio2023inversion,wang2021gfpgan,gu2022vqfr,Yang2021GPEN,zhou2022codeformer,wang2023restoreformer++,wang2022restoreformer,adrai2023deep,wang2018esrgan,liang2021swinir,wang2021realesrgan,zhang2021designing,8368474}, and (3) algorithms attempting to sample from the posterior distribution $p_{X|Y}$ of the given task at hand~\cite{wang2022zero,kawar2022denoising,chung2023diffusion,song2023pseudoinverseguided,sean-jpeg,ohayon-posterior,kawar-posterior,NEURIPS2021_b5c01503,Whang_2022_CVPR}.
In~\cref{appendix:toy}, we demonstrate on a simple toy example that all these types of algorithms may achieve poor PF, implying that perfect PF is not a property that can be obtained trivially.
Namely, even when using common reconstruction algorithms such as the Minimum Mean-Squared-Error (MMSE) estimator or the posterior sampler, one group may attain far worse GPI than another group.
It is therefore tempting to ask in which scenarios there exists an algorithm capable of achieving perfect GPI for all groups simultaneously.
As stated in the following theorem, this desired property is unattainable when the degradation is sufficiently severe.
\begin{restatable}{theorem}{disjoint}
\label{theorem:disjoint}
Suppose that $\exists a_{1},a_{2}\in\supp{p_{A}}$ such that
\begin{align}
    &\mathbb{P}(X\in \mathcal{X}_{a_{1}}\cap \mathcal{X}_{a_{2}}|A=a_{i})<\mathbb{P}(Y\in \mathcal{Y}_{a_{1}}\cap \mathcal{Y}_{a_{2}}|A=a_{i}),
\end{align}
for both $i=1,2$, where $\mathcal{X}_{a_{i}}=\supp{p_{X|A}(\cdot|a_{i})}$ and $\mathcal{Y}_{a_{i}}=\supp{p_{Y|A}(\cdot|a_{i})}$.
Then, $\text{GPI}_{d}(a_{1})$ and $\text{GPI}_{d}(a_{2})$ cannot both be equal to zero.
\end{restatable}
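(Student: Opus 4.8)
The plan is to argue by contradiction. Suppose, for a proper divergence $d$ (one that vanishes exactly when its arguments coincide), that $\text{GPI}_{d}(a_{1})=\text{GPI}_{d}(a_{2})=0$. Then $p_{\hat{X}|A}(\cdot|a_{i})=p_{X|A}(\cdot|a_{i})$ for $i=1,2$, so in particular the supports agree: writing $\hat{\mathcal{X}}_{a_{i}}\coloneqq\supp{p_{\hat{X}|A}(\cdot|a_{i})}$, we have $\hat{\mathcal{X}}_{a_{i}}=\mathcal{X}_{a_{i}}$ and hence $\hat{\mathcal{X}}_{a_{1}}\cap\hat{\mathcal{X}}_{a_{2}}=\mathcal{X}_{a_{1}}\cap\mathcal{X}_{a_{2}}\eqqcolon S$. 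The goal is to show that this support identity, combined with the Markov structure $A\rightarrow Y\rightarrow\hat{X}$, is incompatible with the hypothesis.

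The crux is a support-propagation claim: for every $y$ in the overlap $\mathcal{Y}_{a_{1}}\cap\mathcal{Y}_{a_{2}}$, every reconstruction generated from $y$ lands in $S$. To see this, factor the conditional through the Markov chain,
\begin{align}
p_{\hat{X}|A}(\cdot|a_{i})=\int p_{\hat{X}|Y}(\cdot|y)\,p_{Y|A}(y|a_{i})\,dy,\label{eq:markov-mix}
\end{align}
which exhibits $p_{\hat{X}|A}(\cdot|a_{i})$ as a mixture of the posteriors $p_{\hat{X}|Y}(\cdot|y)$ over $y\in\mathcal{Y}_{a_{i}}$. Consequently $\hat{\mathcal{X}}_{y}\coloneqq\supp{p_{\hat{X}|Y}(\cdot|y)}\subseteq\hat{\mathcal{X}}_{a_{i}}$ for $y\in\mathcal{Y}_{a_{i}}$, so any $y\in\mathcal{Y}_{a_{1}}\cap\mathcal{Y}_{a_{2}}$ satisfies $\hat{\mathcal{X}}_{y}\subseteq\hat{\mathcal{X}}_{a_{1}}\cap\hat{\mathcal{X}}_{a_{2}}=S$; equivalently, $\mathbb{P}(\hat{X}\in S|Y=y)=1$ for such $y$.

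To finish, I would turn this into a probability bound and invoke the hypothesis. Using \eqref{eq:markov-mix} once more and discarding all $y$ outside the overlap,
\begin{align}
\mathbb{P}(\hat{X}\in S|A=a_{i})=\int p_{Y|A}(y|a_{i})\,\mathbb{P}(\hat{X}\in S|Y=y)\,dy\geq\mathbb{P}(Y\in\mathcal{Y}_{a_{1}}\cap\mathcal{Y}_{a_{2}}|A=a_{i}),
\end{align}
since the integrand equals $1$ on the overlap. On the other hand, $p_{\hat{X}|A}(\cdot|a_{i})=p_{X|A}(\cdot|a_{i})$ gives $\mathbb{P}(\hat{X}\in S|A=a_{i})=\mathbb{P}(X\in S|A=a_{i})$. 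Combining the two displays with the hypothesis yields, for each $i$,
\begin{align}
\mathbb{P}(X\in S|A=a_{i})=\mathbb{P}(\hat{X}\in S|A=a_{i})\geq\mathbb{P}(Y\in\mathcal{Y}_{a_{1}}\cap\mathcal{Y}_{a_{2}}|A=a_{i})>\mathbb{P}(X\in S|A=a_{i}),
\end{align}
a contradiction. Hence $\text{GPI}_{d}(a_{1})$ and $\text{GPI}_{d}(a_{2})$ cannot both vanish.

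I expect the support-propagation claim $\hat{\mathcal{X}}_{y}\subseteq\hat{\mathcal{X}}_{a_{i}}$ to be the only genuinely delicate point. Passing from positivity of $p_{\hat{X}|Y}(\hat{x}|y)$ at a single $y$ to positivity of $p_{\hat{X}|A}(\hat{x}|a_{i})$ is not purely formal, since a mixture can annihilate mass carried on a $p_{Y|A}(\cdot|a_{i})$-null set of $y$'s, so that the support of the mixture is strictly smaller than the union of the component supports; moreover, the bounds for the two groups involve the distinct measures $p_{Y|A}(\cdot|a_{1})$ and $p_{Y|A}(\cdot|a_{2})$, whose null sets need not coincide on the overlap. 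Both issues disappear under a mild regularity assumption on the restoration kernel $p_{\hat{X}|Y}$ (e.g.\ weak continuity in $y$), which upgrades the containment to hold for \emph{every} $y\in\mathcal{Y}_{a_{i}}$ rather than almost everywhere, after which the probability bound follows verbatim. Everything else is a direct consequence of the Markov factorization \eqref{eq:markov-mix} and the defining property of a divergence.
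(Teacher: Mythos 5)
Your proof is correct and follows essentially the same route as the paper's: contradiction, the Markov factorization $p_{\hat{X}|A}(\cdot|a_i)=\int p_{\hat{X}|Y}(\cdot|y)\,p_{Y|A}(y|a_i)\,dy$, the observation that reconstructions from measurements in $\mathcal{Y}_{a_1}\cap\mathcal{Y}_{a_2}$ must land in both groups' supports, and the resulting inequality $\mathbb{P}(X\in\mathcal{X}_{a_1}\cap\mathcal{X}_{a_2}|A=a_i)\geq\mathbb{P}(Y\in\mathcal{Y}_{a_1}\cap\mathcal{Y}_{a_2}|A=a_i)$ contradicting the hypothesis. The delicate point you flag --- that the containment $\supp{p_{\hat{X}|Y}(\cdot|y)}\subseteq\hat{\mathcal{X}}_{a_i}$ holds only for $p_{Y|A}(\cdot|a_i)$-almost every $y$, while the final bound for group $a_1$ integrates against the possibly different measure $p_{Y|A}(\cdot|a_1)$ --- is present in the paper's own proof as well, which establishes $\mathbb{P}(\hat{X}\in\mathcal{X}_{a_i}|Y=y)=1$ only for almost every $y\in\mathcal{Y}_{a_i}$ and then invokes it as if it held everywhere on the overlap.
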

In words,~\cref{theorem:disjoint} states that when the degraded images of different groups are ``more overlapping'' than their ground truth images, at least one group must have sub-optimal GPI.
Importantly, note that perfect GPI can always be achieved for some group corresponding to $A=a$ individually, by ignoring the input and sampling from $p_{X|A}(\cdot|a)$.
Hence,~\cref{theorem:disjoint} implies that, for sufficiently severe degradations, one may attempt to approach zero GPI for all groups simultaneously, until the GPI of one group hinders that of another one.
But what about algorithms that just attain perfect \emph{overall} PI? Can such algorithms also attain perfect PF?
As stated in the following theorem, it turns out that these two desired properties (perfect PI and perfect PF) are often incompatible.
\begin{restatable}{theorem}{pfipitradeoff}\label{corollary:pfi-pi-tradeoff}
Suppose that $A$ takes discrete values, $\hat{X}$ attains perfect \PI{d} ($p_{\hat{X}}=p_{X}$), and $\exists a,a_{m}\in\supp{p_{A}}$ such that $\text{GPI}_{d}(a)>0$ and \mbox{$\mathbb{P}(A=a_{m})>0.5$}.
Then, $\hat{X}$ cannot achieve perfect \PF{d_{\text{TV}}}.
\end{restatable}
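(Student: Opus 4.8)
The plan is to argue by contradiction. Suppose $\hat{X}$ \emph{does} attain perfect \PF{d_{\text{TV}}}, meaning all the total-variation GPIs share a common value $c$, i.e.\ $\text{GPI}_{d_{\text{TV}}}(a')=c$ for every $a'\in\supp{p_{A}}$. First I would observe that the hypothesis $\text{GPI}_{d}(a)>0$ forces $c>0$: since $d$ is a genuine divergence, $\text{GPI}_{d}(a)>0$ is equivalent to $p_{X|A}(\cdot|a)\neq p_{\hat{X}|A}(\cdot|a)$, which in turn (total variation being a divergence too) is equivalent to $\text{GPI}_{d_{\text{TV}}}(a)>0$; under the perfect-PF assumption this positive value is exactly $c$, so $c>0$.

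The engine of the proof is the marginal constraint $p_{\hat{X}}=p_{X}$. Writing $\pi_{a'}\coloneqq\mathbb{P}(A=a')$ and using the discreteness of $A$ together with the law of total probability, both $p_{X}=\sum_{a'}\pi_{a'}p_{X|A}(\cdot|a')$ and $p_{\hat{X}}=\sum_{a'}\pi_{a'}p_{\hat{X}|A}(\cdot|a')$ hold. Introducing the signed densities $\Delta_{a'}\coloneqq p_{\hat{X}|A}(\cdot|a')-p_{X|A}(\cdot|a')$, the assumption $p_{\hat{X}}=p_{X}$ becomes $\sum_{a'}\pi_{a'}\Delta_{a'}=0$. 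I would then isolate the majority term, $\pi_{a_{m}}\Delta_{a_{m}}=-\sum_{a'\neq a_{m}}\pi_{a'}\Delta_{a'}$, take $L^{1}$ norms, and apply the triangle inequality to obtain $\pi_{a_{m}}\norm{\Delta_{a_{m}}}_{1}\le\sum_{a'\neq a_{m}}\pi_{a'}\norm{\Delta_{a'}}_{1}$. Since $\text{GPI}_{d_{\text{TV}}}(a')=\tfrac{1}{2}\norm{\Delta_{a'}}_{1}$ by definition of the total variation distance, this reads $\pi_{a_{m}}\,\text{GPI}_{d_{\text{TV}}}(a_{m})\le\sum_{a'\neq a_{m}}\pi_{a'}\,\text{GPI}_{d_{\text{TV}}}(a')$.

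Finally, I would substitute the common value $c$ furnished by the perfect-PF assumption. The inequality collapses to $\pi_{a_{m}}c\le(1-\pi_{a_{m}})c$, and dividing by $c>0$ yields $\pi_{a_{m}}\le 1-\pi_{a_{m}}$, i.e.\ $\mathbb{P}(A=a_{m})\le 0.5$. This contradicts the hypothesis $\mathbb{P}(A=a_{m})>0.5$, completing the argument.

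The proof is short, and the step I would treat as the crux is recognizing that the overall perceptual constraint $p_{\hat{X}}=p_{X}$ translates into a weighted-sum-to-zero condition on the per-group discrepancies $\Delta_{a'}$, so that the majority group's deviation is forced to be dominated by a weighted combination of the others'. Applying the triangle inequality in the correct direction---bounding the isolated majority term rather than the aggregate---is what converts this into the decisive inequality; the remainder is bookkeeping. I would additionally verify two minor points: that the convexity/triangle step remains valid for a countable support of $A$ (it does, since each $\norm{\Delta_{a'}}_{1}\le 2$ while the weights $\pi_{a'}$ sum to one), and that $d$ is assumed to vanish exactly on coinciding distributions, which is what lets the positivity of $\text{GPI}_{d}(a)$ transfer to $\text{GPI}_{d_{\text{TV}}}(a)$ and thereby guarantee $c>0$.
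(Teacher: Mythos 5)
Your proof is correct and follows essentially the same route as the paper: the inequality $\pi_{a_{m}}\,\text{GPI}_{d_{\text{TV}}}(a_{m})\le\sum_{a'\neq a_{m}}\pi_{a'}\,\text{GPI}_{d_{\text{TV}}}(a')$ that you derive from the mixture identity $\sum_{a'}\pi_{a'}\Delta_{a'}=0$ and the triangle inequality is precisely the paper's \cref{theorem:gpibound}, which its proof of this statement invokes directly. The only difference is presentational---you run a single contradiction argument with the common value $c$ (checking $c>0$ up front), whereas the paper splits into the cases $\text{GPI}_{d_{\text{TV}}}(a_{m})=0$ and $\text{GPI}_{d_{\text{TV}}}(a_{m})>0$ and exhibits two groups with unequal GPI---so there is nothing substantive to add.
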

In words, when there exists a majority group in the data distribution,~\cref{corollary:pfi-pi-tradeoff} states that an algorithm with perfect PI, whose GPI is not perfect \emph{even for only one group}, cannot achieve perfect \PF{d_{\text{TV}}}.
This intriguing outcome results from the following convenient relationship between the GPIs of different groups for algorithms with perfect PI.
\begin{restatable}{theorem}{gpibound}\label{theorem:gpibound}
Suppose that $A$ takes discrete values and $\hat{X}$ attains perfect \PI{d} ($p_{\hat{X}}=p_{X}$).
Then,
\begin{align}
    \text{GPI}_{d_{\text{TV}}}(a)\leq\frac{1}{\mathbb{P}(A=a)}\sum_{a'\neq a}\mathbb{P}(A=a')\text{GPI}_{d_{\text{TV}}}(a')
\end{align}
for every $a$ with $\mathbb{P}(A=a)>0$.
\end{restatable}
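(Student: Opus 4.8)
The plan is to turn the perfect-PI hypothesis into a single linear identity among the group-conditional densities and then conclude by the triangle inequality for the $L^1$ norm. The key observation is that, because $A$ is discrete, both marginals are countable mixtures of the group-conditional densities, so the constraint $p_{\hat X}=p_X$ couples the per-group discrepancies $p_{X|A}(\cdot|a')-p_{\hat X|A}(\cdot|a')$ in a way that lets me bound the discrepancy of any one group by a weighted sum of the others.

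First I would expand both marginals via the law of total probability,
\[
p_X(x)=\sum_{a'}\mathbb{P}(A=a')\,p_{X|A}(x|a'),\qquad
p_{\hat X}(x)=\sum_{a'}\mathbb{P}(A=a')\,p_{\hat X|A}(x|a').
\]
Imposing $p_{\hat X}=p_X$ and subtracting gives, for every $x$,
\[
\sum_{a'}\mathbb{P}(A=a')\bigl(p_{X|A}(x|a')-p_{\hat X|A}(x|a')\bigr)=0 .
\]
Next I would isolate the fixed group $a$ by moving the remaining terms to the right-hand side,
\[
\mathbb{P}(A=a)\bigl(p_{X|A}(x|a)-p_{\hat X|A}(x|a)\bigr)
=-\sum_{a'\neq a}\mathbb{P}(A=a')\bigl(p_{X|A}(x|a')-p_{\hat X|A}(x|a')\bigr).
\]
Taking absolute values, integrating over $x$, and applying the triangle inequality to the sum on the right then yields
\[
\mathbb{P}(A=a)\!\int\!\bigl|p_{X|A}(x|a)-p_{\hat X|A}(x|a)\bigr|dx
\le\sum_{a'\neq a}\mathbb{P}(A=a')\!\int\!\bigl|p_{X|A}(x|a')-p_{\hat X|A}(x|a')\bigr|dx .
\]
Finally, dividing by $\mathbb{P}(A=a)>0$ and multiplying through by $\tfrac12$ converts each integral into the corresponding total-variation GPI, since $\text{GPI}_{d_{\text{TV}}}(a)=\tfrac12\int|p_{X|A}(x|a)-p_{\hat X|A}(x|a)|dx$, giving precisely the claimed inequality.

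I do not expect a substantial obstacle here; the argument is essentially a one-line consequence of the triangle inequality once the total-probability identity is written down. The only point requiring a little care is the legitimacy of interchanging the absolute value with the summation over $a'$ (the triangle inequality step) and with the integral over $x$, both justified because $A$ is discrete, so the mixture is a genuine countable sum of nonnegative integrable densities. This is also exactly where the discreteness assumption on $A$ is used, and it is what makes the weights $\mathbb{P}(A=a')$ appear cleanly in the bound.
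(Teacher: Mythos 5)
Your proposal is correct and follows essentially the same route as the paper's proof: expand both marginals as mixtures over $A$, use $p_{\hat{X}}=p_{X}$ to isolate the term for group $a$, then apply the triangle inequality and integrate to convert the pointwise identity into a bound on the total-variation distances. No gaps; the argument matches the paper's step for step.
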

This theorem is, perhaps, counter-intuitive.
Indeed, for algorithms with perfect PI, improving the \GPI{d_{\text{TV}}} of one group can only \emph{improve} the \GPI{d_{\text{TV}}} of other groups, and this is true \emph{even if the groups do not overlap}\footnote{Two groups with attributes $a_{1},a_{2}$ are overlapping if \mbox{$\mathbb{P}(X\in \mathcal{X}_{a_{1}}\cap \mathcal{X}_{a_{2}})>0$}, where $\mathcal{X}_{a_{i}}=\supp{p_{X|A}(\cdot|a_{i})}$.}.
While this may seem contradictory to~\cref{theorem:disjoint}, note that such a relationship holds until the algorithm can no longer attain perfect PI.
The example in~\cref{appendix:toy} demonstrates this theorem.
\section{Experiments}\label{section:experiments}
We demonstrate the superiority of PF over RDP in detecting fairness bias in face image super-resolution.
Our analysis considers various aspects, including different types of degradations, and fairness evaluations across four groups categorized by ethnicity and age.
First, we show that RDP incorrectly attributes fairness in a simple scenario where fairness is clearly violated.
In contrast, PF successfully detects the bias.
Second, we showcase a scenario where PF uncovers potential malicious intent.
Specifically, it can detect bias injected into the system via adversarial attacks, a situation again missed by RDP.
\subsection{Synthetic data sets}\label{section:data-set}
In the following sections we assess the fairness of leading face image restoration methods through the lens of PF and RDP.
Such methods are often trained and evaluated on high-quality, aligned face image datasets like CelebA-HQ~\cite{karras2018progressive} and FFHQ~\cite{ffhq}, which lack ground truth labels for sensitive attributes such as ethnicity.
Moreover, these datasets are prone to inherent biases, \eg, they contain very few images for certain demographic groups~\cite{karkkainenfairface,rudd2016moon,Huber_2024_WACV}, and it is unclear whether images from different groups have similar levels of image quality and variation (prior work suggests that they might not~\cite{pmlr-v81-buolamwini18a}).
To address these limitations, we leverage an image-to-image translation model that takes a text instruction as additional input.
This model allows us to generate four synthetic fairness groups with high-quality, aligned face images.
Specifically, we translate each image $x$ from the CelebA-HQ~\cite{karras2018progressive} test partition into four different images representing Asian/non-Asian and young/old individuals\footnote{We choose to consider these fairness groups since image restoration algorithms are likely biased towards young and white demographics, given the overrepresentation of such groups in common training datasets (\eg, FFHQ, CelebA). Namely, groups of Asian and/or old individuals are typically underrepresented in such datasets.}.
We use a unique text instruction for each translation.
For example, the text instruction \code{``120 years old human, Asian, natural image, sharp, DSLR''} translates $x$ into an image of an old\&Asian individual.
Finally, we include each resulting image in its corresponding group data only if \emph{all} translations are successful according to the FairFace combined age \& ethnicity classifier~\cite{karkkainenfairface}.
This involves classifying the ethnicity and age of the translated images and ensuring that old individuals are categorized as 70+ years old, young individuals are categorized as any other age group, Asian individuals are classified as either Southeast or East Asian, and non-Asian individuals are classified as belonging to any other ethnicity group.
See~\cref{appendix:synthetic-celeba} for more details and for the visualization of the results.

\paragraph{Disclaimer.}Importantly, we note that the generated synthetic data sets may impose offensive biases and stereotypes.
We use such data sets solely to investigate the fairness of image restoration methods and verify the practical utility of our work.
We do not intend to discriminate against any identity group or cultures in any way.
\subsection{Perceptual Fairness vs. Representation Demographic Parity}\label{section:detecting-bias-with-pf}
\begin{figure}
    \centering
    \includegraphics[width=1\linewidth]{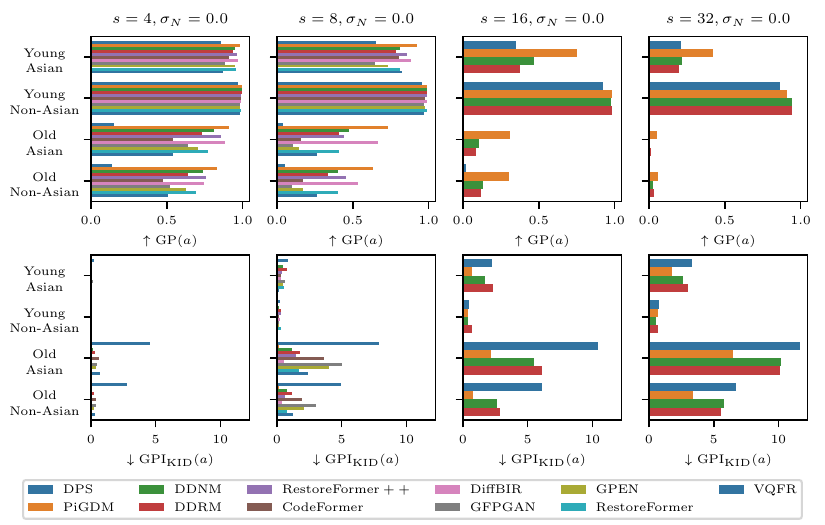}
    \caption{Comparison of the GP and the \GPI{\text{KID}} of different fairness groups, using various state-of-the-art face image super-resolution methods.
    In most experiments, \GPI{\text{KID}} suggests a fairness discrepancy between the groups old\&non-Asian and old\&Asian, while the GP of these groups is roughly equal.}
    \label{fig:quantitativesr}
\end{figure}
We consider several image super-resolution tasks using the average-pooling down-sampling operator with scale factors $\smash{s\in\{4,8,16,32\}}$, and statistically independent additive white Gaussian noise of standard deviation $\smash{\sigma_{N}\in\{0,0.1,0.25\}}$.
In~\cref{appendix:denoising-deblurring} we also conduct experiments on image denoising and deblurring.
The algorithms $\text{DDNM}^{+}$~\cite{wang2022zero}, DDRM~\cite{kawar2022denoising}, DPS~\cite{chung2023diffusion}, and $\text{PiGDM}$~\cite{song2023pseudoinverseguided} are evaluated on all scale factors, and
GFPGAN~\cite{wang2021gfpgan}, VQFR~\cite{gu2022vqfr}, GPEN~\cite{Yang2021GPEN}, DiffBIR~\cite{2023diffbir}, CodeFormer~\cite{zhou2022codeformer}, RestoreFormer++~\cite{wang2023restoreformer++}, and RestoreFormer~\cite{wang2022restoreformer} are evaluated only on the $\times 4$ and $\times 8$ scale factors (these algorithms produce completely wrong outputs for the other scale factors).
To assess the PF of each algorithm, we compute the \GPI{\text{KID}} of each group using the Kernel Inception Distance (KID)~\cite{bińkowski2018demystifying} and the features extracted from the last pooling layer of the FairFace combined age \& ethnicity classifier~\cite{karkkainenfairface}.
In~\cref{appendix:fid-instead-of-kid} we utilize the Fréchet Inception Distance (FID)~\cite{fid} instead of KID, and in~\cref{appendix:additional-metrics} we assess other types of group metrics such as PSNR. Additionally, we provide in~\cref{appendix:ablation-feature-extractors} an ablation study of alternative feature extractors.
To assess RDP, we use the same FairFace classifier to compute the GP of each group.
As done in~\cite{pmlr-v139-jalal21b}, we approximate the GP of each group by the classification hit rate, which is the ratio between the number of the group's reconstructions that are classified as belonging to the group and the total number of the group's inputs.
Qualitative and quantitative results for $s=32,\sigma_{N}=0.0$ are presented in~\cref{fig:vis-and-quantitative-gp-kid}.
Quantitative results for all values of $s$ and $\sigma_{N}=0.0$ are shown in~\cref{fig:quantitativesr}.
Complementary details and results are provided in~\cref{appendix:additional-metrics-face-restoration}.

\cref{fig:quantitativesr} shows that the group young\&non-Asian receives the best overall treatment in terms of both GP and \GPI{\text{KID}}.
This result is not surprising, since the training data sets of the evaluated algorithms (\eg, FFHQ) are known to be biased towards young and white demographics~\cite{orel2020lifespan,bias-race-ffhq}.
However, while most algorithms appear to treat the groups old\&Asian and old\&non-Asian quite similarly in terms of GP, the \GPI{\text{KID}} indicates a clear disadvantage for the former group.
Indeed, by examining ethnicity and age separately using the FairFace classifier, we show in~\cref{appendix:disentangle_age_and_ethnicity} that, according to RDP, the group old\&non-Asian exhibits better preservation of the ethnicity attribute compared to the group old\&Asian, while the age attribute remains equally preserved for both groups.
This highlights that RDP is \emph{strongly} dependent on the granularity of the fairness groups (as suggested in~\cite{pmlr-v139-jalal21b}), since slightly altering the groups' partitioning may \emph{completely} obscure the fact that an algorithm treats certain attributes more favorably than others.
However, as our results show, this issue is alleviated when adopting \GPI{\text{KID}} instead of GP.
Namely, the ethnicity bias is still detected by comparing the \GPI{\text{KID}} of different groups, even though the fairness groups are partitioned based on age and ethnicity combined.

\subsection{Adversarial bias detection}\label{section:adversarial-bias-detection}
\begin{figure}
\centering
\begin{subfigure}[t]{.45\textwidth}
  \centering
\includegraphics[width=0.8\linewidth]{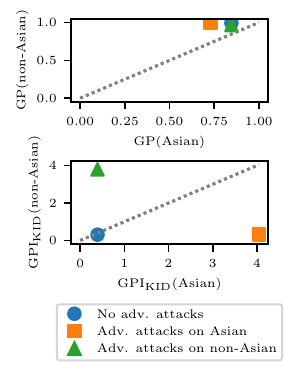}
  \caption{RestoreFormer++ achieves roughly similar GP for both groups (\ie, roughly satisfies RDP). The adversarial attacks on the inputs from the non-Asian group remain undetected by GP, while they highly affect the GPI.}
  \label{fig:quantitative_attacks}
\end{subfigure}
\hspace{0.03\textwidth}
\begin{subfigure}[t]{.45\textwidth}
  \centering
\includegraphics[width=1\linewidth]{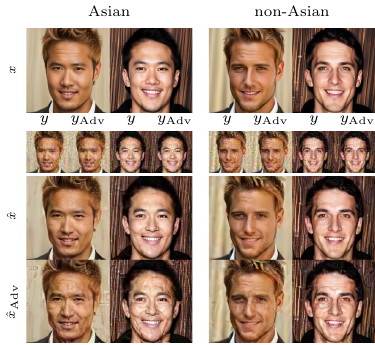}
  \caption{Visual results. $y$ and $x$ are the original input and the source image, respectively. $y_{\text{Adv}}$ and $\hat{x}_{\text{Adv}}$ are the adversarial input and its corresponding output, respectively.
  Each $y_{\text{Adv}}$ successfully alters the output facial features.
  Indeed, $\hat{x}_{\text{Adv}}$ clearly contains a face with more wrinkles than $x$.}
  \label{fig:qualitative_attacks}
\end{subfigure}
\caption{Using adversarial attacks to inject bias into the outputs of RestoreFormer++, in a setting where it (roughly) satisfies RDP. Such attacks are detected by PF but not by RDP.}
\label{fig:attacks}
\end{figure}
In~\cref{section:previous-notions} we discussed the limitations of fairness definitions such as RDP.
For instance, an algorithm might satisfy RDP by always generating the same output for degraded images of a particular group, even if it produces perfect results for another. 
However, such an extreme scenario is not common in practice.
Indeed, real-world imaging systems often involve degradations that are not too severe, and well-trained algorithms perform impressively well when applied to different groups (see, \eg, \cref{fig:qualitative_attacks}).
So what practical advantage does PF have over RDP in such circumstances?
Here, we demonstrate that a malicious user can manipulate the facial features (\eg, wrinkles) of a group's reconstructions without violating fairness according to RDP, but violating fairness according to PF.
In particular, we consider only the ethnicity sensitive attribute by taking the young\&Asian group as Asian, and the young\&non-Asian group as non-Asian.
Then, we use the RestoreFormer++ method, which roughly satisfies RDP with respect to these groups (see~\cref{fig:quantitative_attacks}, where GP is evaluated by classifying ethnicity alone), and perform adversarial attacks on the inputs of each group to manipulate the outputs such that they are classified as belonging to the 70+ age category.
The fact that the GP of each group is quite large implies that the malicious user can classify ethnicity quite accurately from the degraded images, and then manipulate the inputs only for the group it wishes to harm (we skip such a classification step and simply attack all of the group's inputs).
Such attacks are anticipated to succeed due to the perception-robustness tradeoff~\cite{ohayon2023perceptionrobustness,pmlr-v202-ohayon23a}.
Complementary details of this experiment are provided in~\cref{appendix:adv-attacks-details}.

In~\cref{fig:attacks}, we present both quantitative and qualitative results demonstrating that the attacks on the non-Asian group are not detected by RDP.
However, we clearly observe that these attacks are successfully identified by the \GPI{\text{KID}} of each group.
This again highlights that PF is less sensitive to the choice (partitioning) of fairness groups compared to RDP.
Specifically, age must be considered as a sensitive attribute to detect such a bias via RDP.
Yet, even then, the malicious user may still inject other types of biases.
Conversely, PF does not suffer from this limitation, as any attempt to manipulate the distribution of a group's reconstructions would be reflected in the group's GPI.

\section{Discussion}\label{section:discussion}
Different demographic groups can utilize an image restoration algorithm, and fairness in this context asserts whether the algorithm ``treats'' all groups equally well.
In this paper, we introduce the notion of Perceptual Fairness (PF) to assess whether such a desired property is upheld.
We delve into the theoretical foundation of PF, demonstrate its practical utility, and discuss its superiority over existing fairness definitions.
Still, our work is not without limitations.
First, while PF alleviates the strong dependence of RDP on the choice of fairness groups~\cite{pmlr-v139-jalal21b} (as demonstrated in~\cref{section:experiments}), it still cannot guarantee fairness for any arbitrary group partitioning simultaneously (a property referred to as \emph{obliviousness} in~\cite{pmlr-v139-jalal21b}).
Second, our current theorems are preliminary, requiring further research to fully understand the nature of PF.
For example, the severity of the tradeoff between the GPI scores of different groups (\cref{theorem:disjoint}) and that of the tradeoff between PF and PI (\cref{corollary:pfi-pi-tradeoff}) remain unclear.
Third, we do not address the nature of optimal estimators that achieve good or perfect PF. What is their best possible distortion (\eg, MSE) and best possible PI?
Fourth, on the practical side, we show in~\cref{appendix:ablation-feature-extractors} that effectively evaluating PF using metrics such as KID necessitates utilizing image features extracted from a classifier dedicated to handling the considered sensitive attributes (\eg, an age and ethnicity classifier). However, this is not a disadvantage compared to previous fairness notions (RDP, CPR and PR), which also require such a classifier.
Lastly, while the proposed GPI may be suitable for evaluating fairness in general-content natural images, we considered only human face images due to their societal implications, namely since fairness issues are particularly critical when dealing with such images.
For example, if a general-content image restoration algorithm performs better on images with complex structures than on images of clear skies, this discrepancy is unlikely to be problematic for practitioners, as long as the algorithm attains good performance overall. Moreover, previous works~\citep{pmlr-v139-jalal21b} evaluated fairness with respect to non-human subjects (\eg, dogs and cats), but these studies provide limited insights into human-related fairness issues, which often arise due to subtle differences between images (\eg, wrinkles). Expanding our method to other datasets remains an avenue for future work.

\section{Societal impact}\label{section:societal-impact}
Designing fair and unbiased image restoration algorithms is critical for various AI applications and downstream tasks that rely on them, such as facial recognition, image classification, and image editing.
By proposing practically useful and well-justified fairness definitions, we can detect (and mitigate) bias in these tasks, ultimately leading to fairer societal outcomes.
This fosters increased trust and adoption of AI technology, contributing to a more equitable and responsible use of AI in society.

\section*{Acknowledgments}
This research was partially supported by the Israel Science Foundation (ISF) under Grant 2318/22 and by the Council For Higher Education - Planning \& Budgeting Committee.

\bibliographystyle{plainnat}
\bibliography{egbib}

\newpage
\appendix
\section{Summary of mathematical notations and fairness definitions}\label{appendix:summary-of-notations}
We summarize in~\cref{tab:mathematical-notations} the mathematical notations and fairness definitions used in this paper.
\begin{table}[t]
    \centering
    \begin{tabular}{c|l}
        Name / Notation & Meaning / Formal definition \\
        \hline \\
        $X$ & Ground truth image (a random vector)\\
        $Y$ & Degraded measurement (a random vector)\\
        $\hat{X}$ & Reconstructed image (a random vector)\\
        $p_{X}$ & P.d.f of the ground truth images\\
        $p_{Y}$ & P.d.f of the degraded measurements\\
        $p_{\hat{X}}$ & P.d.f of the reconstructed images\\
        Perceptual Index ($\text{PI}_{d}$ or PI) & $d(p_{X},p_{\hat{X}})$\\
        $A$ & Sensitive attribute (a random vector)\\
        $p_{X|A}(\cdot|a)$ & P.d.f of the ground truth images of $A=a$\\
        $p_{Y|A}(\cdot|a)$ & P.d.f of the degraded measurements of $A=a$\\
        $p_{\hat{X}|A}(\cdot|a)$ & P.d.f of the reconstructed images of $A=a$\\
        $\mathcal{X}_{a}$ & $\supp{p_{X|A}(\cdot|a)}$\\
        $\mathcal{Y}_{a}$ & $\supp{p_{Y|A}(\cdot|a)}$\\
        $\hat{\mathcal{X}}_{a}$ & $\supp{p_{\hat{X}|A}(\cdot|a)}$\\
        Group Perceptual Index ($\text{GPI}_{d}(a)$, $\text{GPI}_{d}$, or GPI) & $d(p_{X|A}(\cdot|a),p_{\hat{X}|A}(\cdot|a))$\\
        Group Precision ($\text{GP}(a)$ or GP) & $\mathbb{P}(\hat{X}\in\mathcal{X}_{a}|A=a)$\\
        Group Recall ($\text{GR}(a)$ or GR) & $\mathbb{P}(X\in\hat{\mathcal{X}}_{a}|A=a)$\\
        Representation Demographic Parity (RDP) & $\forall a_{1},a_{2}:\:\text{GP}(a_{1})=\text{GP}(a_{2})$\\
        Proportional Representation (PR) & $\forall a:\:\mathbb{P}(X\in\mathcal{X}_{a})=\mathbb{P}(\hat{X}\in\mathcal{X}_{a})$\\
        Conditional Proportional Representation (CPR) & $\forall a,y:\:\mathbb{P}(X\in\mathcal{X}_{a}|Y=y)=\mathbb{P}(\hat{X}\in\mathcal{X}_{a}|Y=y)$\\
        Perceptual Fairness (\PF{d} or PF) & $\forall a_{1},a_{2}:\: \text{GPI}_{d}(a_{1})=\text{GPI}_{d}(a_{2})$
    \end{tabular}
    \caption{Summary of mathematical notations and fairness definitions used in this paper.}
    \label{tab:mathematical-notations}
\end{table}
\section{Toy signal restoration example}\label{appendix:toy}
The following toy signal restoration example demonstrates that common estimators (\eg, the stochastic estimator which samples from the posterior distribution $p_{X|Y}$) do not trivially achieve perfect PF.
\begin{figure}[H]
    \centering
    \includegraphics[width=1\textwidth]{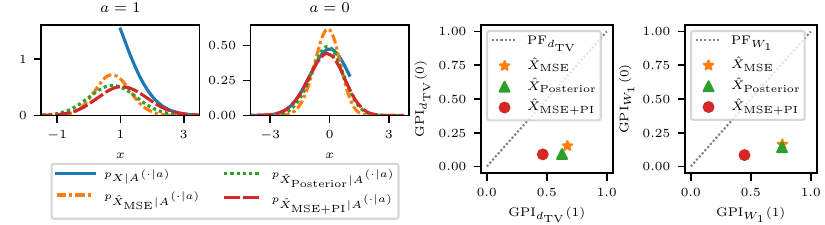}
    \caption{Illustration of~\cref{example:toy-dmax}.
    \textbf{Left}: Conditional probability density functions $p_{X|A}(\cdot|a),\,p_{\hat{X}_{\text{MSE}}|A}(\cdot|a),\,p_{\hat{X}_{\text{Posterior}}|A}(\cdot|a),$ and $p_{\hat{X}_{\text{MSE+PI}}|A}(\cdot|a)$, where $a=1$ (left plot) or $a=0$ (right plot).
    \textbf{Right}: The \GPI{d_{\text{TV}}} and \GPI{W_{1}} of each group (associated with $a=1$ or $a=0$).
    The dotted lines \PF{d_{\text{TV}}} or \PF{W_{1}} correspond to the points where perfect \PF{d_{\text{TV}}} or perfect \PF{W_{1}} is achieved, respectively. It is clear that all three estimators achieve sub-optimal \PF{d_{\text{TV}}} and sub-optimal \PF{W_{1}}.
    See~\cref{appendix:toy} for more details.}
    \label{fig:toy}
\end{figure}
\begin{example}\label{example:toy-dmax}
    Suppose that $X,N\sim\mathcal{N}(0,1)$ are statistically independent random variables, and let $\smash{Y=X+N}$.
    In this case, it is known that $\smash{\hat{X}_{\text{MSE}}=\frac{1}{2}Y}$ is the estimator that attains the lowest possible Mean-Squared-Error (MSE), $\smash{\hat{X}_{\text{Posterior}}=\frac{1}{2}Y+W}$ where $\smash{W\sim\mathcal{N}(0,\frac{1}{2})}$ is statistically independent of $\smash{X}$ and $\smash{Y}$, is the estimator that samples from the posterior distribution $p_{X|Y}$, and $\smash{\hat{X}_{\text{MSE+PI}}=\frac{1}{\sqrt{2}}Y}$ is the estimator that attains the lowest possible MSE among all estimators that satisfy $p_{\hat{X}}=p_{X}$ (perfect \PI{d})~\cite{Blau2018,dror}.
    Now, consider the ``sensitive attribute'' $A=\mathds{1}_{X\geq 1}$.
    All of these commonly used estimators produce much better (lower) \GPI{d_{\text{TV}}} and \GPI{W_{1}} for the group associated with $A=0$, which, in this case, is a majority satisfying $\mathbb{P}(A=0)\approx 0.8413$ (see~\cref{fig:toy}).
\end{example}
\subsection{Conditional density plots}\label{appendix:density-plots-details}
The density $p_{X|A}(x|a)$ is obtained using the closed form solution of a truncated normal distribution,
\begin{align}
    &p_{X|A}(x|1)=\frac{\phi(x)}{\Phi(\infty)-\Phi(1)},\\
    &p_{X|A}(x|0)=\frac{\phi(x)}{\Phi(1)-\Phi(-\infty)},
\end{align}
where $\phi(x)$ is a normal density and $\Phi(x)$ is its cumulative distribution,
\begin{align}
    &\phi(x)=\frac{1}{\sqrt{2\pi}}e^{-\frac{1}{2}x^{2}},\\
    &\Phi(x)=\frac{1}{2}\left(1+\text{erf}\left(\frac{x}{\sqrt{2}}\right)\right),
\end{align}
and $\smash{p_{X|A}(x|1)=0}$ and $\smash{p_{X|A}(x|0)=0}$ for every $x\geq 1$ and $x\leq 1$, respectively.
The densities $\smash{p_{\hat{X}_{\text{MSE}}|A}(\cdot|a),p_{\hat{X}_{\text{MSE+PQ}}|A}(\cdot|a)}$ and $\smash{p_{\hat{X}_{\text{Posterior}}|A}(\cdot|a)}$ are obtained by feeding these algorithms with the degraded measurements corresponding to $\smash{X\geq 1}$ (for $a=1$) and to $\smash{X<1}$ (for $a=0$), separately.
This is achieved by generating samples $x\sim p_{X}$ and $y\sim p_{Y|X}(\cdot|x)$, and then partitioning these samples into two sets of measurements based on the value of $x$.
We then perform Kernel Density Estimation (KDE)~\cite{kde} on the reconstructions of each group to obtain their density, using the function \code{seaborn.kdeplot}~\cite{Waskom2021} with the arguments \code{bw\_adjust=2, common\_norm=False, gridsize=200}.
The number of samples used to compute the KDE is set to 200,000 for both $a=1$ and $a=0$.
\subsection{Computation of the total variation distance $d_{\text{TV}}$ and of the Wasserstein distance $W_{1}$}
The value of $\text{GPI}_{d_{\text{TV}}}(a)$ for a given algorithm $\hat{X}$ is defined by the total variation distance
\begin{align}
   \text{GPI}_{d_{\text{TV}}}(a)=d_{\text{TV}}(p_{X|A}(\cdot|a),p_{\hat{X}|A}(\cdot|a))=\frac{1}{2}\int\left|p_{X|A}(x|a)-p_{\hat{X}|A}(x|a)\right|dx .
\end{align}
To compute this integral, we use the function  \code{scipy.integrate.quad}~\cite{2020SciPy-NMeth} with parameters \code{(a=-1000, b=1000, limit=500, points=[1.0])}.
At each point $x$, the integrand
\begin{align}
    \left|p_{X|A}(x|a)-p_{\hat{X}|A}(x|a)\right|
\end{align}
is evaluated using the closed form solution of $p_{X|A}(\cdot|a)$ and the pre-computed KDE density of each $p_{\hat{X}|A}(\cdot|a)$.

The value of $\text{GPI}_{W_{1}}(a)$ for a given algorithm $\hat{X}$ is the Wasserstein 1-distance between $\smash{p_{X|A}(\cdot|a)}$ and $\smash{p_{\hat{X}|A}(\cdot|a)}$.
To approximate this distance, we utilize the function \code{scipy.stats.wasserstein\_distance} with the previously obtained 200,000 samples from $p_{X|A}(\cdot|a)$ and 200,000 samples from $p_{\hat{X}|A}(\cdot|a)$.
\section{Proof of~\cref{theorem:hitratebound}}\label{appendix:proof-hitrate}
\hitratebound*
\begin{proof}
For every $a,x$, it holds that
\begin{align}
    p_{\hat{X}|A}(x|a)\geq\min{\left\{p_{X|A}(x|a),p_{\hat{X}|A}(x|a)\right\}}.
\end{align}
Moreover, the value of $\min{\left\{p_{X|A}(x|a),p_{\hat{X}|A}(x|a)\right\}}$ is zero for every $x\notin \supp{p_{X|A}(\cdot|a)}$, so
\begin{align}
    \int_{\supp{p_{X|A}(\cdot|a)}}\min{\left\{p_{X|A}(x|a),p_{\hat{X}|A}(x|a)\right\}}dx=\int \min{\left\{p_{X|A}(x|a),p_{\hat{X}|A}(x|a)\right\}}dx.
\end{align}
Thus,
  \begin{align}
\text{GP}(a)&=\mathbb{P}(\hat{X}\in \mathcal{X}_{a}|A=a)\\
&=\mathbb{P}(\hat{X}\in\supp{p_{X|A}(\cdot|a)}|A=a)\\
    &=\int_{\supp{p_{X|A}(\cdot|a)}}p_{\hat{X}|A}(x|a)dx\\
    &\geq\int_{\supp{p_{X|A}(\cdot|a)}} \min{\left\{p_{X|A}(x|a),p_{\hat{X}|A}(x|a)\right\}}dx\\
    &=\int \min{\left\{p_{X|A}(x|a),p_{\hat{X}|A}(x|a)\right\}}dx\\
    &=\int \frac{1}{2}\left(p_{\hat{X}|A}(x|a)+p_{X|A}(x|a)-\left|p_{\hat{X}|A}(x|a)-p_{X|A}(x|a)\right|\right)dx\\
    &=\frac{1}{2}\int \left(p_{\hat{X}|A}(x|a)+ p_{X|A}(x|a)\right)dx-\frac{1}{2}\int\left|p_{\hat{X}|A}(x|a)-p_{X|A}(x|a)\right|dx\\
    &=1-d_{\text{TV}}(p_{X|A}(\cdot|a),p_{\hat{X}|A}(\cdot|a))\\
    &=1-\text{GPI}_{d_{\text{TV}}}(a).
\end{align}
By replacing the roles of $p_{\hat{X}|A}(x|a)$ and $p_{X|A}(x|a)$, the result $\text{GR}(a)\geq 1-\text{GPI}_{d_{\text{TV}}}(a)$ can be derived with identical steps using the same mathematical arguments.
\end{proof}

\section{Proof of~\cref{theorem:disjoint}}\label{appendix:proof-disjoint}
\disjoint*
\begin{proof}
Suppose by contradiction that $p_{\hat{X}|A}(\cdot|a_{i})=p_{X|A}(\cdot|a_{i})$ for both $i=1,2$.
Thus,
\begin{align}
   1&=\mathbb{P}(X\in \mathcal{X}_{a_{i}}|A=a_{i})\\
   &=\mathbb{P}(\hat{X}\in \mathcal{X}_{a_{i}}|A=a_{i})\\
   &=\int_{\mathcal{X}_{a_{i}}} p_{\hat{X}|A}(x|a_{i})dx\\
  &=\int\int_{\mathcal{X}_{a_{i}}} p_{\hat{X},Y|A}(x|a_{i})dxdy\\
    &=\int\int_{\mathcal{X}_{a_{i}}} p_{\hat{X}|A,Y}(x|a_{i})p_{Y|A}(y|a_{i})dxdy\\
    &=\int_{\mathcal{Y}_{a_{i}}}\int_{\mathcal{X}_{a_{i}}} p_{\hat{X}|Y}(x|y)p_{Y|A}(y|a_{i})dxdy\label{eq:markov1}\\
    &=\int_{\mathcal{Y}_{a_{i}}} p_{Y|A}(y|a_{i})\left(\int_{\mathcal{X}_{a_{i}}}p_{\hat{X}|Y}(x|y)dx\right)dy\\
    &=\int_{\mathcal{Y}_{a_{i}}} p_{Y|A}(y|a_{i})\mathbb{P}(\hat{X}\in\mathcal{X}_{a_{i}}|Y=y)dy,\label{eq:same-steps-orig}
\end{align}
where~\cref{eq:markov1} holds from the assumption that $A$ and $\hat{X}$ are statistically independent given $Y$, and from the fact that $p_{Y|A}(y|a_{i})=0$ for every $y\notin\mathcal{Y}_{a_{i}}$.
We will show that $\smash{\mathbb{P}(\hat{X}\in \mathcal{X}_{a_{i}}|Y=y)=1}$ for almost every $y\in \mathcal{Y}_{a_{i}}$.
Indeed, if this does not hold, then for some $\smash{\mathcal{T}_{i}\subseteq\mathcal{Y}_{a_{i}}}$ with ${\mathbb{P}(Y\in\mathcal{T}_{i}|A=a_{i})>0}$ we have $\smash{\mathbb{P}(\hat{X}\in \mathcal{X}_{a_{i}}|Y=y)<1}$ for every $y\in\mathcal{T}_{i}$.
Thus,
\begin{align}
    1&=\int_{\mathcal{Y}_{a_{i}}} p_{Y|A}(y|a_{i})\mathbb{P}(\hat{X}\in\mathcal{X}_{a_{i}}|Y=y)dy\\
    &=\int_{\mathcal{Y}_{a_{i}}\setminus\mathcal{T}_{i} } p_{Y|A}(y|a_{i})\mathbb{P}(\hat{X}\in\mathcal{X}_{a_{i}}|Y=y)dy+\int_{\mathcal{T}_{i}} p_{Y|A}(y|a_{i})\mathbb{P}(\hat{X}\in\mathcal{X}_{a_{i}}|Y=y)dy\\
    &<\int_{\mathcal{Y}_{a_{i}}\setminus\mathcal{T}_{i} } p_{Y|A}(y|a_{i})\mathbb{P}(\hat{X}\in\mathcal{X}_{a_{i}}|Y=y)dy+\int_{\mathcal{T}_{i}} p_{Y|A}(y|a_{i})dy\\
    &\leq \int_{\mathcal{Y}_{a_{i}}\setminus\mathcal{T}_{i} } p_{Y|A}(y|a_{i})dy+\int_{\mathcal{T}_{i}} p_{Y|A}(y|a_{i})dy\\
    &=\int_{\mathcal{Y}_{a_{i}}} p_{Y|A}(y|a_{i})dy\\
    &=1,
\end{align}
which is not possible.
So, $\mathbb{P}(\hat{X}\in \mathcal{X}_{a_{i}}|Y=y)=1$ for almost every $y\in \mathcal{Y}_{a_{i}}$.
Now, from basic rules of probability theory, we have
\begin{align}
\mathbb{P}(X\in\mathcal{X}_{a_{1}}\cap\mathcal{X}_{a_{2}}|A=a_{1})=&\mathbb{P}(X\in\mathcal{X}_{a_{1}}|A=a_{1})\nonumber\\&+\mathbb{P}(X\in\mathcal{X}_{a_{2}}|A=a_{1})\nonumber\\&-\mathbb{P}(X\in\mathcal{X}_{a_{1}}\cup\mathcal{X}_{a_{2}}|A=a_{1}),
\end{align}
where the first and last terms on the right hand side cancel out (from the definition of $\mathcal{X}_{a_{1}}$, they are both equal to 1).
Thus, we have
\begin{align}
\mathbb{P}(X\in\mathcal{X}_{a_{1}}\cap\mathcal{X}_{a_{2}}|A=a_{1})=\mathbb{P}(X\in\mathcal{X}_{a_{2}}|A=a_{1}),
\end{align}
and finally,
\begin{align}
\mathbb{P}(X\in\mathcal{X}_{a_{1}}\cap\mathcal{X}_{a_{2}}|A=a_{1})&=\mathbb{P}(X\in\mathcal{X}_{a_{2}}|A=a_{1})\\
    &=\mathbb{P}(\hat{X}\in\mathcal{X}_{a_{2}}|A=a_{1})\label{eq:assumtion5}\\
    &=\int_{\mathcal{Y}_{a_{1}}} p_{Y|A}(y|a_{1})\mathbb{P}(\hat{X}\in\mathcal{X}_{a_{2}}|Y=y)dy\label{eq:same-steps}\\
    &\geq \int_{\mathcal{Y}_{a_{1}}\cap\mathcal{Y}_{a_{2}}} p_{Y|A}(y|a_{1})\mathbb{P}(\hat{X}\in\mathcal{X}_{a_{2}}|Y=y)dy\\
    &=\int_{\mathcal{Y}_{a_{1}}\cap\mathcal{Y}_{a_{2}}} p_{Y|A}(y|a_{1})dy\label{eq:assumtion6}\\
    &=\mathbb{P}(Y\in\mathcal{Y}_{a_{1}}\cap\mathcal{Y}_{a_{2}}|A=a_{1}),
\end{align}
where~\cref{eq:assumtion5} follows from the contradictory assumption that $\smash{p_{\hat{X}|A}(\cdot|a_{i})=p_{X|A}(\cdot|a_{i})}$,~\cref{eq:same-steps} follows from the same steps that led to~\cref{eq:same-steps-orig}, and~\cref{eq:assumtion6} follows from our previous finding that $\smash{\mathbb{P}(\hat{X}\in\mathcal{X}_{a_{i}}|Y=y)=1}$ for every $y\in\mathcal{Y}_{a_{i}}$ (we have $y\in\mathcal{Y}_{a_{1}}\cap\mathcal{Y}_{a_{2}}$ in the integrand, so $y\in\mathcal{Y}_{a_{2}}$).
However, it is given that $\mathbb{P}(X\in\mathcal{X}_{a_{1}}\cap\mathcal{X}_{a_{2}}|A=a_{1})<\mathbb{P}(Y\in\mathcal{Y}_{a_{1}}\cap\mathcal{Y}_{a_{2}}|A=a_{1})$, so we have established a contradiction.
\end{proof}
\section{Proof of~\cref{corollary:pfi-pi-tradeoff}}\label{appendix:proof-tradeoff}
\pfipitradeoff*
\begin{proof}
Suppose that $\text{GPI}_{d_{\text{TV}}}(a_{m})=0$.
From the assumptions, there exists $a\neq a_{m}$ such that $\text{GPI}_{d}(a)>0$, so $\text{GPI}_{d_{\text{TV}}}(a)>0$. This means that $\text{PF}_{d_{\text{TV}}}$ is not perfect.

Otherwise, suppose that $\text{GPI}_{d_{\text{TV}}}(a_{m})>0$.
Thus, from~\cref{theorem:gpibound} we have
\begin{align}
    \text{GPI}_{d_{\text{TV}}}(a_{m})&\leq\frac{1-\mathbb{P}(A=a_{m})}{\mathbb{P}(A=a_{m})}\max_{a'\neq a_{m}}\text{GPI}_{d_{\text{TV}}}(a')\\
    &<\max_{a'\neq a_{m}}\text{GPI}_{d_{\text{TV}}}(a)\label{coro:second}\\
    &=\text{GPI}_{d_{\text{TV}}}(a^{*}),\label{coro:last}
\end{align}
where~\cref{coro:second} holds since $\frac{1-\mathbb{P}(A=a_{m})}{\mathbb{P}(A=a_{m})}<1$, and~\cref{coro:last} holds by defining
\begin{align}
    a^{*}=\argmax_{a'\neq a}{\text{GPI}_{d_{\text{TV}}}(a')}.
\end{align}
Thus, we have found two groups $a_{m}$ and $a^{*}$ such that $\text{GPI}_{d_{\text{TV}}}(a_{m})<\text{GPI}_{d_{\text{TV}}}(a^{*})$, so $\text{PF}_{d_{\text{TV}}}$ cannot be perfect.
\end{proof}

\section{Proof of~\cref{theorem:gpibound}}\label{appendix:proof-gpibound}
\gpibound*
\begin{proof}
For every $a$, let us denote $P_{a}=\mathbb{P}(A=a)$.
Suppose that $\hat{X}$ attains perfect perceptual index, so $p_{\hat{X}}=p_{X}$.
From the marginalization of probability density functions, it holds that
\begin{align}
    &p_{X}(x)=\sum_{a}P_{a}p_{X|A}(x|a),\\
    &p_{\hat{X}}(x)=\sum_{a}P_{a}p_{\hat{X}|A}(x|a),
\end{align}
and since $p_{\hat{X}}=p_{X}$ we have
\begin{align}
    \sum_{a}P_{a}p_{X|A}(x|a)=\sum_{a}P_{a}p_{\hat{X}|A}(x|a).\label{eq:mixture}
\end{align}
Let $a$ be some group with $P_{a}>0$.
By rearranging~\cref{eq:mixture} we get
\begin{align}
    P_{a}(p_{X|A}(x|a)-p_{\hat{X}|A}(x|a))=\sum_{a'\neq a}P_{a'}(p_{\hat{X}|A}(x|a')-p_{X|A}(x|a')).
\end{align}
Taking the absolute value on both sides, we have
\begin{align}
    P_{a}\left|p_{X|A}(x|a)-p_{\hat{X}|A}(x|a)\right|&=\left|\sum_{a'\neq a}P_{a'}(p_{\hat{X}|A}(x|a')-p_{X|A}(x|a'))\right|\\
    &\leq\sum_{a'\neq a}P_{a'}\left|p_{\hat{X}|A}(x|a')-p_{X|A}(x|a'))\right|,\label{eq:triangle}
\end{align}
where~\cref{eq:triangle} follows from the triangle inequality.
Thus, it holds that
\begin{align}
    d_{\text{TV}}(p_{X|A}(\cdot|a),p_{\hat{X}|A}(\cdot|a))\nonumber
    &=\frac{1}{2}\int \left|p_{X|A}(x|a)-p_{\hat{X}|A}(x|a)\right|dx\\
    &\leq \frac{1}{2}\int \frac{1}{P_{a}}\sum_{a'\neq a}P_{a'}\left|p_{\hat{X}|A}(x|a')-p_{X|A}(x|a')\right|dx\\
    &= \frac{1}{P_{a}}\sum_{a'\neq a}P_{a'}\left(\frac{1}{2}\int \left|p_{\hat{X}|A}(x|a')-p_{X|A}(x|a')\right|dx\right)\\
    &=\frac{1}{P_{a}}\sum_{a'\neq a}P_{a'}d_{\text{TV}}(p_{X|A}(\cdot|a'),p_{\hat{X}|A}(\cdot|a')).
\end{align}
This concludes the proof.
\end{proof}
\section{Face image super-resolution - complementary details and results}\label{appendix:additional-metrics-face-restoration}
\subsection{Synthetic data sets}\label{appendix:synthetic-celeba}
All the CelebA-HQ images we use are of size $512\times 512$.
The image-to-image translation model we utilize, \code{stabilityai/stable-diffusion-xl-refiner-1.0}, is sourced from Hugging Face~\cite{sdxl-image-to-image} and boasts over 1,200,000 downloads (at the time writing this paper). This model integrates SDXL~\cite{podell2024sdxl} with SDEdit~\cite{meng2022sdedit}. For all groups, we adjust the hyperparameters \code{strength} and \code{guidance\_scale} from their default settings, with \code{strength} set to 0.4.
When translating a CelebA-HQ image $x$ into a group image using its specified text instruction (see~\cref{tab:text-instructions}), we choose the \emph{smallest} value from $[8.5, 9.5, 10.5, 11.5, 12.5]$ as the \code{guidance\_scale} hyperparameter, such that the resulting image is classified as belonging to the group.
Otherwise, if none of these \code{guidance\_scale} values work for some group (\ie, their class is incorrect), we discard all the translations of $x$ from all groups.
To clarify, this means that the translated images for different groups may use different \code{guidance\_scale} values, as long as all translations are correctly classified.
The text instructions we use for each group are provided in~\cref{tab:text-instructions}.
For all groups, we use the same \code{negative\_prompt} text instruction \code{``ugly, deformed, fake, caricature''}.
Each of the resulting groups contains 1,356 images of size $512\times 512$.
In~\cref{fig:old_asian,fig:old_not_asian,fig:not_old_asian,fig:not_old_not_asian} we present 130 image samples from each group.

\begin{table}[H]
    \centering
    \begin{tabular}{c|c}
    Group & Image-to-image translation text instruction \\
    \hline\\
    Old\&Asian & \code{120 years old human, Asian, natural image, sharp, DSLR}\\
    Young\&Asian & \code{20 years old human, Asian, natural image, sharp, DSLR} \\
    Old\&non-Asian& \code{120 years old human, natural image, sharp, DSLR}\\
    Young\&non-Asian & \code{20 years old human, natural image, sharp, DSLR}
    \end{tabular}
    \caption{Text instructions for the image-to-image translation model to generate images of each fairness group. See~\cref{section:experiments} and~\cref{appendix:synthetic-celeba} for more details.}
    \label{tab:text-instructions}
\end{table}
\subsection{Visual results}\label{appendx:visual-results}
Visual results of all algorithms (the reconstructions of each fairness group) for $s\in\{4,8,16,32\}$ and $\sigma_{N}\in\{0,0.1\}$ are provided in~\cref{fig:s=4-n=0,fig:s=4-n=1,fig:s=8-n=0,fig:s=8-n=1,fig:s=16-n=0,fig:s=16-n=1,fig:s=32-n=0,fig:s=32-n=1}.
\subsection{Additional levels of additive noise}\label{appendix:additional-noise-levels}
\cref{fig:quantitativesr} presents quantitative results with all scaling factors, and without adding white Gaussian noise ($\sigma_{N}=0$).
Here, in~\cref{fig:kid-and-gp-sigma0.1,fig:kid-and-gp-sigma0.25} we report the results with $\sigma_{N}\in\{0.1,0.25\}$.
We observe similar trends and conclusions as in~\cref{fig:quantitativesr} (please refer to~\cref{section:detecting-bias-with-pf} for more details).

\subsection{Comparing \GPI{\text{FID}} instead of \GPI{\text{KID}}}\label{appendix:fid-instead-of-kid}
We report in~\cref{fig:additional-metrics0,fig:additional-metrics01,fig:additional-metrics025} the \GPI{\text{FID}} of each group, where FID is the Fréchet Inception Distance~\cite{fid}.
These results show trends similar to those observed in~\cref{fig:quantitativesr}.
Namely, using the statistical distance FID instead of KID does not alter the trends and conclusions of the results.
\subsection{Additional group metrics}\label{appendix:additional-metrics}
We report, compare and analyze additional group performance metrics.
\paragraph{$\text{GP}_{\text{NN}}$ and $\text{GR}_{\text{NN}}$}We approximate the GP and GR of each group using~\cite{NEURIPS2019_0234c510}, a method which evaluates the precision and recall between two distributions in their feature space. We denote the results by $\text{GP}_{\text{NN}}$ and $\text{GR}_{\text{NN}}$, respectively.
Note that this approach to approximate GP differs from our previous experiments, where we use the classification hit rate (\cref{fig:quantitativesr,fig:kid-and-gp-sigma0.1,fig:kid-and-gp-sigma0.25}).
Similarly to the experiments where we compute \GPI{\text{KID}} (\cref{section:detecting-bias-with-pf}) and \GPI{\text{FID}} (\cref{appendix:fid-instead-of-kid}), $\text{GP}_{\text{NN}}$ and $\text{GR}_{\text{NN}}$ are computed by extracting image features using the last average pooling layer of the FairFace combined age \& ethnicity classifier~\cite{karkkainenfairface}.
\paragraph{GPSNR and GLPIPS}For each group we compute the Peak Signal-to-Noise Ratio (PSNR) and the Learned Perceptual Image Patch Similarly (LPIPS)~\cite{zhang2018perceptual}\footnote{Future work may investigate the utility of \emph{no-reference} perceptual quality measures (\eg, \cite{8352823,6353522,6190099}) to assess fairness in image restoration.}, where these metrics are evaluated by feeding the restoration algorithm only with the group's inputs and with respect to the group's ground truth images.
Formally, we define the Group PSNR (GPSNR) and the Group LPIPS (GLPIPS) as
\begin{align}
    &\text{GPSNR}(a)=\mathbb{E}[\text{PSNR}(X,\hat{X})|A=a],\\
    &\text{GLPIPS}(a)=\mathbb{E}[\text{LPIPS}(X,\hat{X})|A=a],
\end{align}
where the expectation is taken over the joint distribution of a group's ground truth images and their reconstructions, $p_{X,\hat{X}|A}(\cdot,\cdot|a)$.

The results for all noise levels $\sigma_{N}\in\{0.0, 0.1, 0.25\}$ are provided in~\cref{fig:additional-metrics0,fig:additional-metrics01,fig:additional-metrics025}.
First, note that both the GPSNR and the GLPIPS metrics are unreliable indicators of bias. 
For example, the metrics GP, $\text{GP}_{\text{NN}}$ \GPI{\text{KID}}, and \GPI{\text{FID}} all indicate that the group young\&non-Asian receives better treatment than the group young\&Asian (\eg, the GP of the former group is clearly higher than that of the latter group across all noise levels and scaling factors).
However, both groups exhibit roughly similar GPSNR and GLPIPS scores.
This highlights why assessing the fairness of image restoration algorithms solely based on GPSNR, GLPIPS or similar metrics (MSE, SSIM~\cite{ssim}, \etc) might not be sufficient.
This result regarding GPSNR is not surprising, as it is well known that such a metric often does not correlate with perceived image quality~\cite{Blau2018}.
Regarding GLPIPS, it might be more effective to use image features extracted by a classifier trained to identify the sensitive attributes in question. We leave exploring this option for future work.
Second, the $\text{GP}_{\text{NN}}$ values in~\cref{fig:additional-metrics0,fig:additional-metrics01,fig:additional-metrics025} are almost identical to the GP scores reported in~\cref{fig:quantitativesr,fig:kid-and-gp-sigma0.1,fig:kid-and-gp-sigma0.25}.
This suggests that approximating the true GP either through the classification hit rate (as in~\cref{fig:quantitativesr,fig:kid-and-gp-sigma0.1,fig:kid-and-gp-sigma0.25}) or via~\cite{NEURIPS2019_0234c510} (as done in this section), are consistent.
Third, the $\text{GR}_{\text{NN}}$ scores suggest potential unfairness in the perceptual variation across different groups.
For example, when $s=16,\sigma_{N}=0$, we observe that all algorithms consistently produce higher $\text{GR}_{\text{NN}}$ scores for the young\&non-Asian group compared to the young\&Asian group.
\subsection{Feature extractors ablation}\label{appendix:ablation-feature-extractors}
We employ the \code{dinov2-vit-g-14}~\cite{oquab2024dinov}, \code{clip-vit-l-14}~\cite{pmlr-v139-radford21a}, and \code{inception-v3-compat}~\cite{inception} feature extractors via \code{torch-fidelity}~\cite{obukhov2020torchfidelity} to compute the \GPI{\text{KID}} for each fairness group (previously, we used the image features extracted from the FairFace classifier's final average pooling layer). The results are presented in~\cref{fig:dino,fig:clip,fig:inception}.

The outcomes from both the \code{dinov2-vit-g-14} and \code{clip-vit-l-14} feature extractors generally align with those of the FairFace image classifier, though the biases exposed by these extractors are less pronounced. Put differently, computing \GPI{\text{KID}} with either of these general-purpose feature extractors leads to a smaller disparity in the \GPI{\text{KID}} of the different fairness groups.
Moreover, the \code{inception-v3-compat} image feature extractor yields inconsistent results, suggesting that the old\&Asian group receives more favorable treatment compared to the old\&non-Asian group (contrary to the biases indicated by the other feature extractors).
The following section strengthens our argument that this behavior of \code{inception-v3-compat} is undesirable.
Overall, relying on such general-purpose image feature extractors seems unsatisfactory for the purpose of uncovering nuanced biases in face image restoration methods.
\subsection{Considering age and ethnicity as separate sensitive attributes}\label{appendix:disentangle_age_and_ethnicity}
In~\cref{section:detecting-bias-with-pf} we reveal a significant discrepancy between PF and RDP regarding whether the groups old\&Asian and old\&non-Asian are treated equally.
Specifically, both groups achieve similar GP, while the \GPI{\text{KID}} of the latter group (old\&non-Asian) is notably better (lower) than that of the former group (old\&Asian). In other words, \GPI{\text{KID}} indicates that the old\&non-Asian group enjoys a better preservation of ethnicity.

Let us support our claim in~\cref{section:detecting-bias-with-pf} that this outcome of PF is the desired one, by showing that RDP may obscure the fact that some sensitive attributes are treated better than others.
Indeed, as shown in~\cref{fig:only-ethnicity}, the ethnicity of the old\&non-Asian group is better preserved than that of the old\&Asian group, while~\cref{fig:only-age} confirms that the age of these two groups is equally preserved. While RDP fails to uncover this ethnicity bias when the fairness groups are determined based on \emph{both} age and ethnicity, PF clearly reveals it.
\subsection{Final details}
All algorithms are evaluated using the official codes and checkpoints provided by their authors.
We use the \code{torch-fidelity} package~\cite{obukhov2020torchfidelity} (GitHub commit \code{a61422f}) to compute the KID~\cite{bińkowski2018demystifying}, FID~\cite{fid}, precision and recall~\cite{NEURIPS2019_0234c510}.
The GPSNR and the GLPIPS are computed using the \code{piq} package~\cite{piq,kastryulin2022piq} (version 0.8.0 in \code{pip}).

Finally, note that some of the evaluated algorithms generate output images of size $256\times 256$ (\eg, DDNM), while others produce images of size $512\times 512$ (\eg, RestoreFormer).
Consequently, for fair quantitative evaluations, we resize the outputs of the latter algorithms, along with the ground truth images, to $256\times 256$.
To clarify, the super-resolution scaling factors are calculated based on the $256\times 256$ image size.
For instance, when $s=4$, the resolution of the input images is $64\times 64$.
\section{Adversarial attacks - complementary details}\label{appendix:adv-attacks-details}
The degradation we apply consists of three consecutive steps: (1) Average pooling down-sampling with a scale factor of $s=4$, (2) additive white Gaussian noise with a standard deviation of $\sigma_{N}=0.1$, and then (3) JPEG compression with a quality factor of 50.
We attack each degraded image using a tweaked version of the I-FGSM basic attack~\cite{Choi_2019_ICCV} with $\alpha=6/255$ and $T=200$.
In particular, instead of using the $L_{2}$ loss in I-FGSM like in~\cite{Choi_2019_ICCV}, we forward each attacked output through a classifier that predicts the age category of the output face image~\cite{nate_raw_2023}, and then maximize the log-probability of the oldest age group category.
In other words, we forward each degraded image through RestoreFormer++ and then feed the result to the age classifier.
We then use the I-FGSM update rule to maximize the soft-max probability of the oldest age category (this adversarial attack technique was employed in~\cite{ohayon2023perceptionrobustness}).

\section{Additional experiments on image denoising and deblurring}\label{appendix:denoising-deblurring}
We conduct additional experiments on image denoising and deblurring to further demonstrate the utility of the proposed notion of perceptual fairness.
Specifically, for image denoising we use additive white Gaussian noise with standard deviation $\sigma_{N}=0.5$, and for image deblurring we use a Gaussian blur kernel of size $k=5$ and $\sigma=10$, and add to the blurred image a white Gaussian noise of standard deviation $\sigma_{N}\in\{0.1,0.25,0.5\}$.
Since these degradations are not handled well by the GAN-based methods, we only compare DPS, $\text{DDNM}^{+}$, DDRM, and PiGDM.

Quantitative results are reported in~\cref{fig:deblurring-denoising-quantitative}, and visual comparisons are provided in~\cref{fig:denoising,fig:deblurring-01,fig:deblurring-05,fig:deblurring-025}.
As in the super-resolution experiments, PF is able to expose bias (which is also visually clear) when RDP fails to do so, and not vice versa.

\section{Computational resources}
All our experiments are conducted on a NVIDIA RTX A6000 GPU.

\newpage
\clearpage
\begin{figure}
    \centering
    \includegraphics[width=1\textwidth]{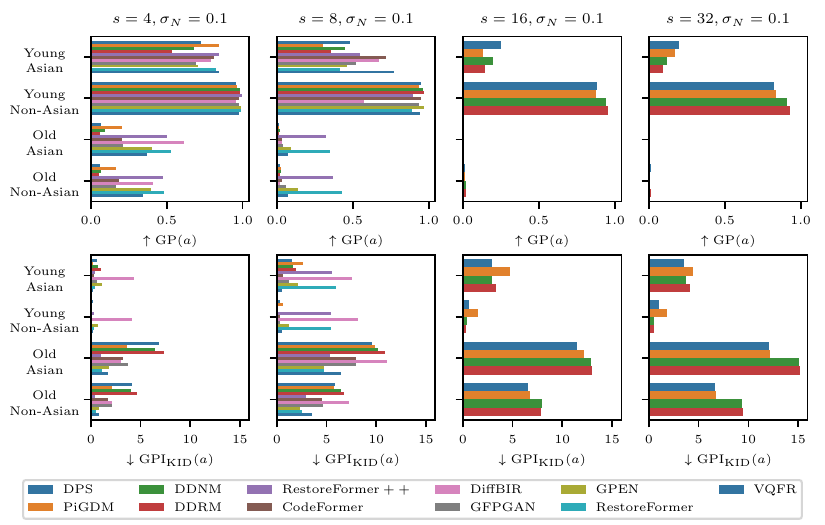}
    \caption{Experiments similar to~\cref{fig:quantitativesr}, but when the standard deviation of the additive white Gaussian noise is $\sigma_{N}=0.1$.}
    \label{fig:kid-and-gp-sigma0.1}
\end{figure}
\begin{figure}
    \centering
    \includegraphics[width=1\textwidth]{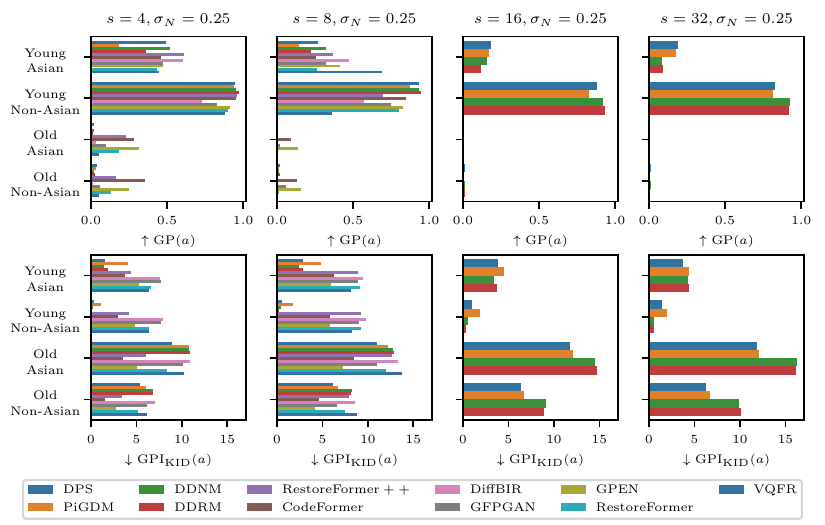}
    \caption{Experiments similar to~\cref{fig:quantitativesr}, but when the standard deviation of the additive white Gaussian noise is $\sigma_{N}=0.25$.}
    \label{fig:kid-and-gp-sigma0.25}
\end{figure}
\begin{figure}
    \centering
    \includegraphics[width=1\textwidth]{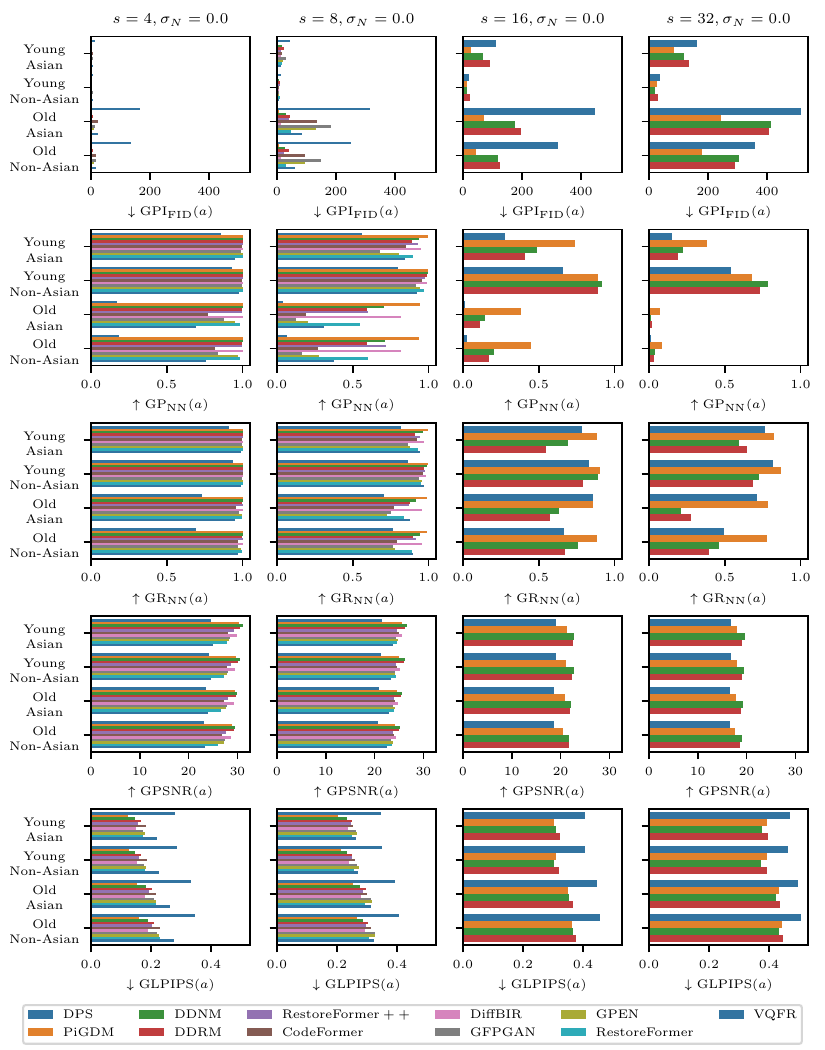}
    \caption{Evaluation of additional group metrics where the additive noise level is $\sigma_{N}=0.0$ and the super-resolution scaling factor is $s\in\{4,8,16,32\}$. Please refer to~\cref{appendix:additional-metrics-face-restoration} for more details.}
    \label{fig:additional-metrics0}
\end{figure}
\begin{figure}
    \centering
    \includegraphics[width=1\textwidth]{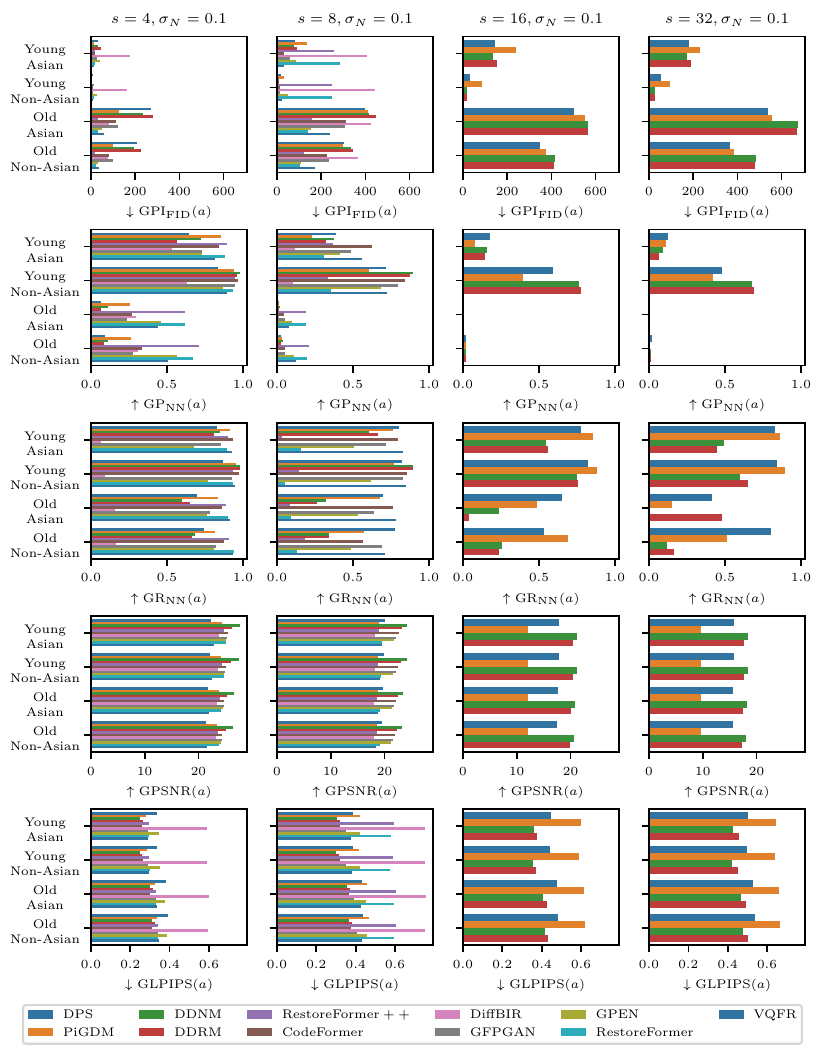}
    \caption{Evaluation of additional group metrics where the additive noise level is $\sigma_{N}=0.1$ and the super-resolution scaling factor is $s\in\{4,8,16,32\}$. Please refer to~\cref{appendix:additional-metrics-face-restoration} for more details.}
    \label{fig:additional-metrics01}
\end{figure}
\begin{figure}
    \centering
    \includegraphics[width=1\textwidth]{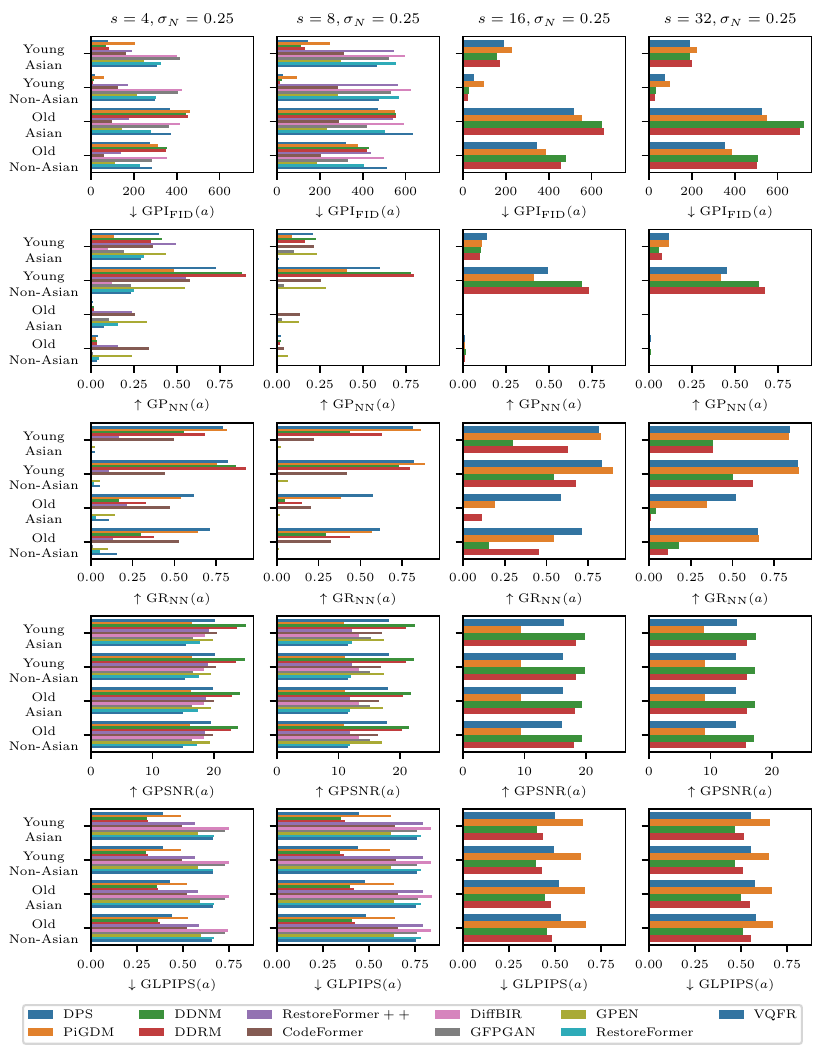}
    \caption{Evaluation of additional group metrics where the additive noise level is $\sigma_{N}=0.25$ and the super-resolution scaling factor is $s\in\{4,8,16,32\}$. Please refer to~\cref{appendix:additional-metrics-face-restoration} for more details.}
    \label{fig:additional-metrics025}
\end{figure}
\begin{figure}
    \centering
    \includegraphics[width=1\textwidth]{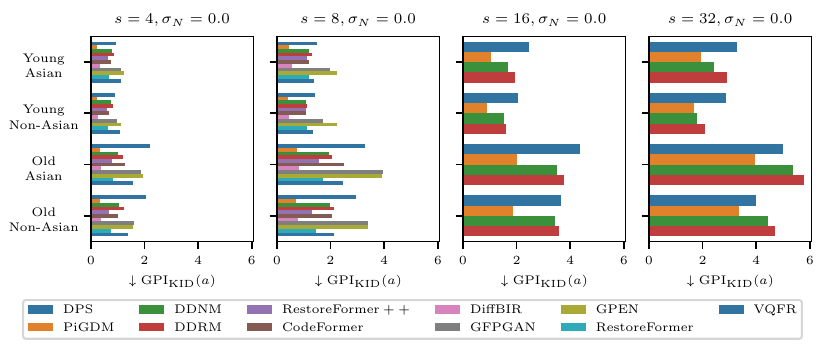}
    \caption{Using the \code{dinov2-vit-g-14} feature extractor~\cite{oquab2024dinov} via \code{torch-fidelity}~\cite{obukhov2020torchfidelity} to compute the $\text{GPI}_{\text{KID}}$ of each group. This general-purpose feature extractor network is somewhat able to detect bias between the old\&Asian and old\&non-Asian (as detected before by extracting features from the FairFace image classifier). However, the bias is significantly less pronounced in this case.}
    \label{fig:dino}
\end{figure}
\begin{figure}
    \centering
    \includegraphics[width=1\textwidth]{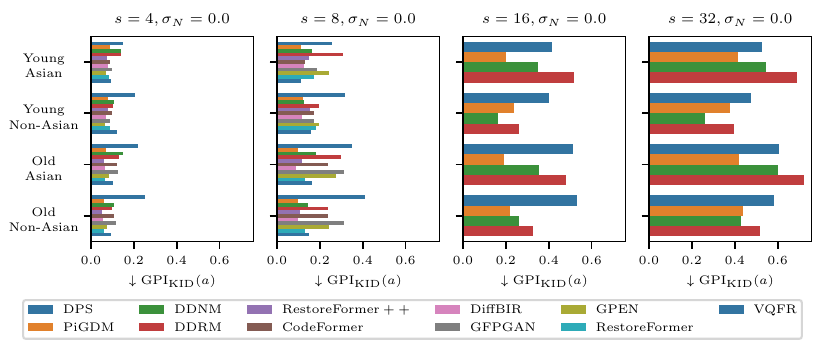}
    \caption{Using the \code{clip-vit-l-14} feature extractor~\cite{pmlr-v139-radford21a} via \code{torch-fidelity}~\cite{obukhov2020torchfidelity} to compute the $\text{GPI}_{\text{KID}}$ of each group. Even this general purpose feature extractor network is somewhat able to detect some bias between the old\&Asian and old\&non-Asian (as detected before by extracting features from the FairFace image classifier). However, the bias is significantly less pronounced in this case.}
    \label{fig:clip}
\end{figure}
\begin{figure}
    \centering
    \includegraphics[width=1\textwidth]{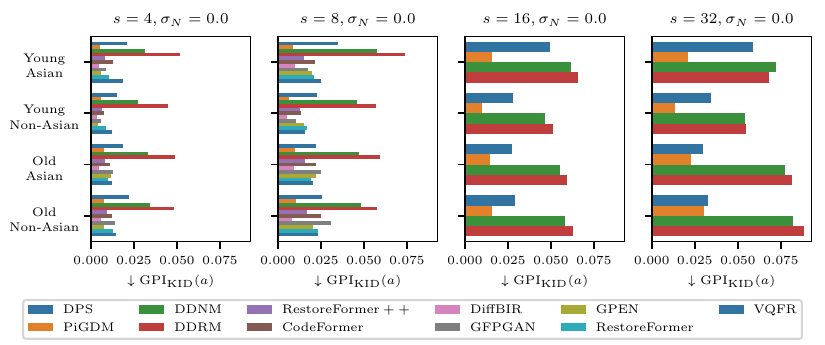}
    \caption{Using the \code{inception-v3-compat} feature extractor~\cite{inception} via \code{torch-fidelity}~\cite{obukhov2020torchfidelity} to compute the $\text{GPI}_{\text{KID}}$ of each group. These results of \code{inception-v3-compat} hint that the old\&Asian group in some cases receive \emph{better} treatment than the old\&non-Asian group, while all the other feature extractors suggest the opposite bias.
    This outcome \code{inception-v3-compat} is also inconsistent with the experiments in~\cref{appendix:disentangle_age_and_ethnicity}, which demonstrate that the old\&non-Asian group is the one receiving the better treatment.}
    \label{fig:inception}
\end{figure}
\begin{figure}
    \centering
    \includegraphics[width=1\textwidth]{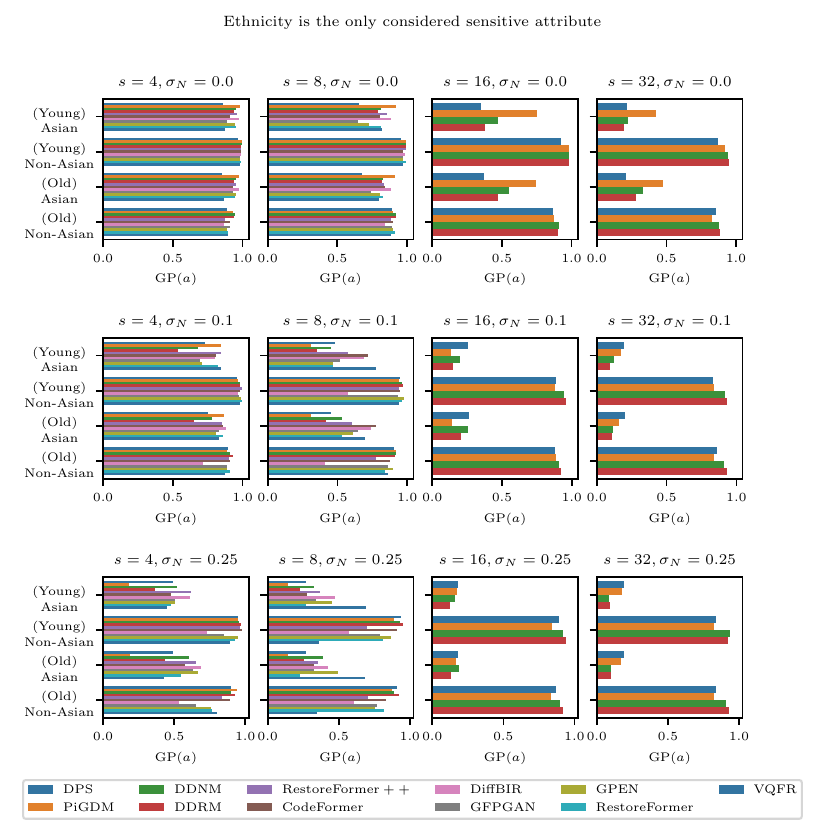}
    \caption{Evaluating the GP of each group, where ethnicity is the only considered sensitive attribute.
    Here, the groups old\&Asian and young\&Asian are each considered as Asian, and the groups old\&non-Asian and young\&non-Asian are each considered as non-Asian.
    For clarity, we still specify in each bar plot the corresponding age of each group, but the classifier operates solely on ethnicity (\ie, the GP is approximated with respect to ethnicity alone).
    As we claim in~\cref{section:detecting-bias-with-pf}, the ethnicity of the old\&non-Asian group is clearly preserved better than that of the old\&Asian group.}
    \label{fig:only-ethnicity}
\end{figure}
\begin{figure}
    \centering
    \includegraphics[width=1\textwidth]{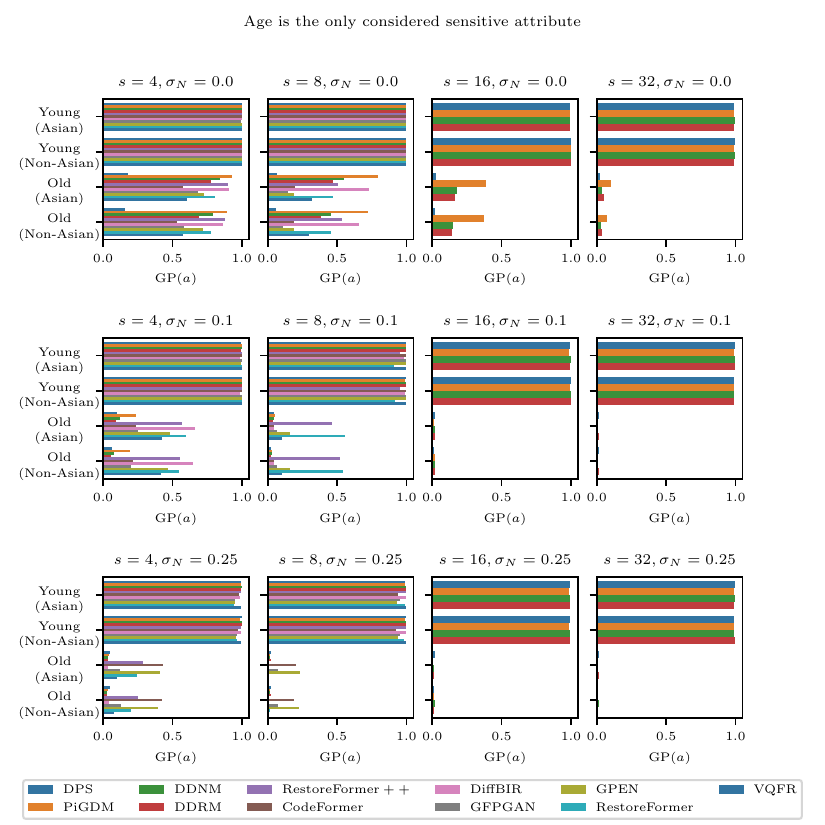}
    \caption{Evaluating the GP of each group, where age is the only considered sensitive attribute.
    Here, the groups old\&Asian and old\&non-Asian are each considered as old, and the groups young\&Asian and young\&non-Asian are each considered as young.
    For clarity, we still specify in each bar plot the corresponding ethnicity of each group, but the classifier operates solely on age (\ie, the GP is approximated with respect to age alone).
    As we claim in~\cref{section:detecting-bias-with-pf}, the age of both the old\&non-Asian and old\&Asian groups is (roughly) equally preserved.}
    \label{fig:only-age}
\end{figure}
\begin{figure}
    \centering
    \includegraphics[width=1\textwidth]{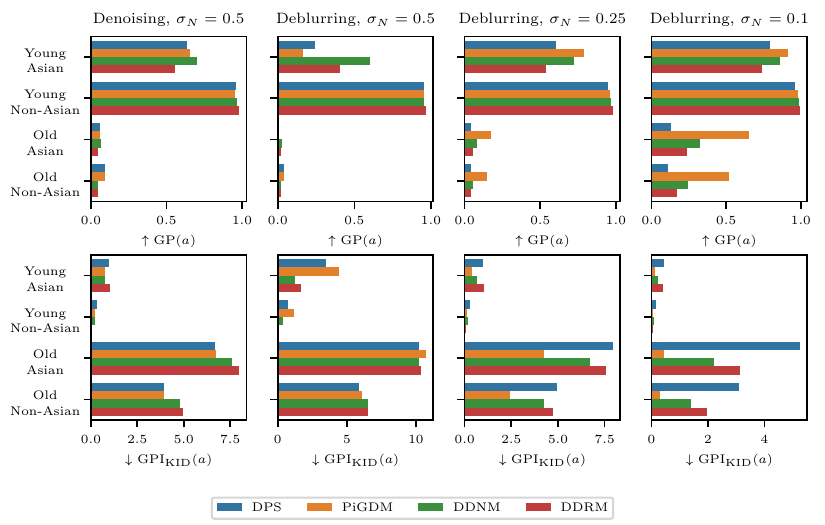}
    \caption{Experiments similar to~\cref{fig:quantitativesr}, but on the image denoising and deblurring tasks described in~\cref{appendix:denoising-deblurring}. We observe similar trends in these tasks as well. Namely, as in the super-resolution tasks, PF exposes a clear bias when RDP does not (but not vice versa).}
    \label{fig:deblurring-denoising-quantitative}
\end{figure}
\begin{figure}
    \centering
    \includegraphics[width=1\textwidth]{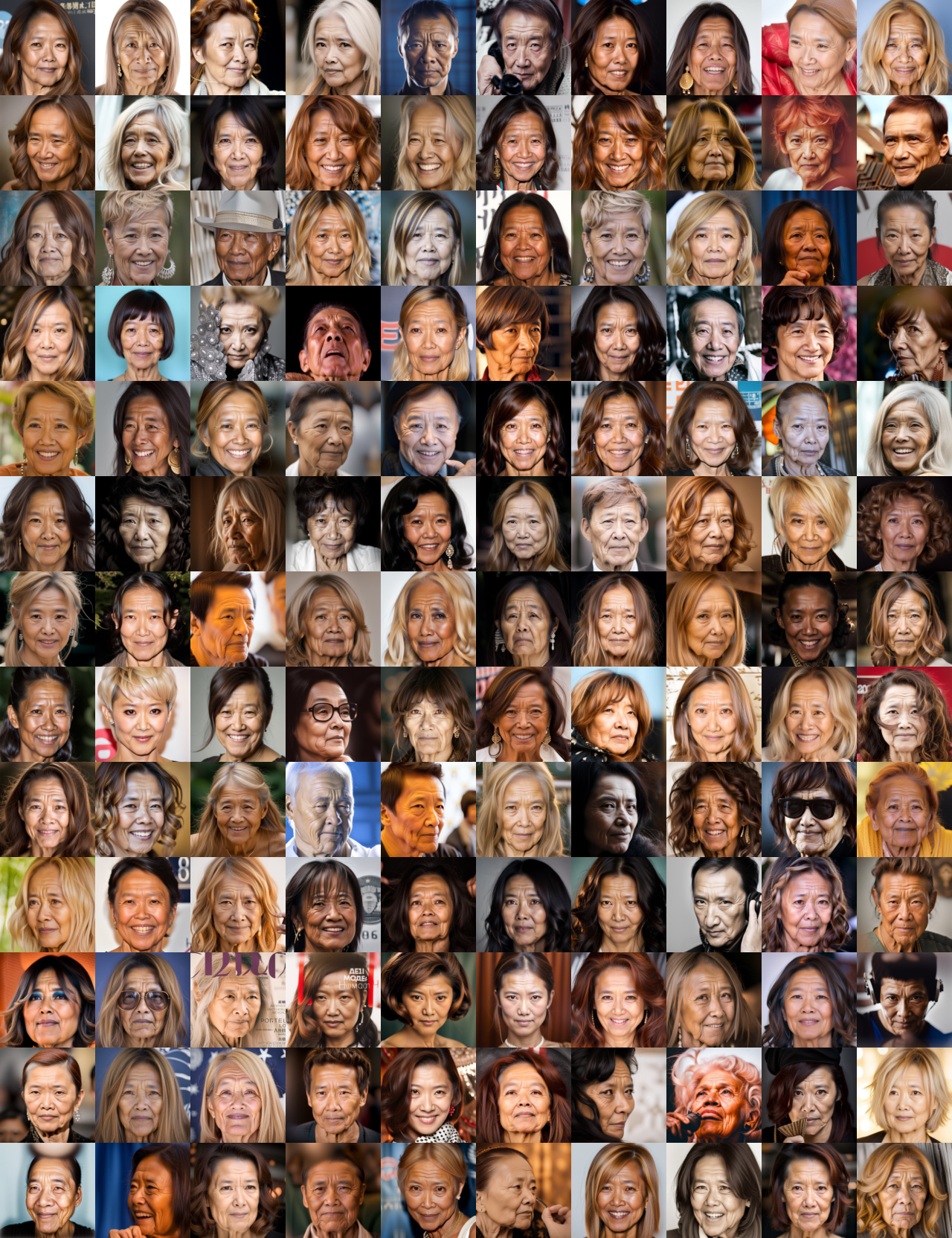}
    \caption{Examples of generated images for the old\&Asian user group. These samples were generated by passing images from the CelebA-HQ test partition~\cite{karras2018progressive} through the SDXL image-to-image model. The text instruction used was \code{``120 years old human, Asian, natural image, sharp, DSLR''}. The FairFace ethnicity and age classifier~\cite{karkkainenfairface} categorizes all of these images as belonging to either the Southeast Asian or East Asian ethnicities, and to the 70+ age group.}
    \label{fig:old_asian}
\end{figure}
\begin{figure}
    \centering
    \includegraphics[width=1\textwidth]{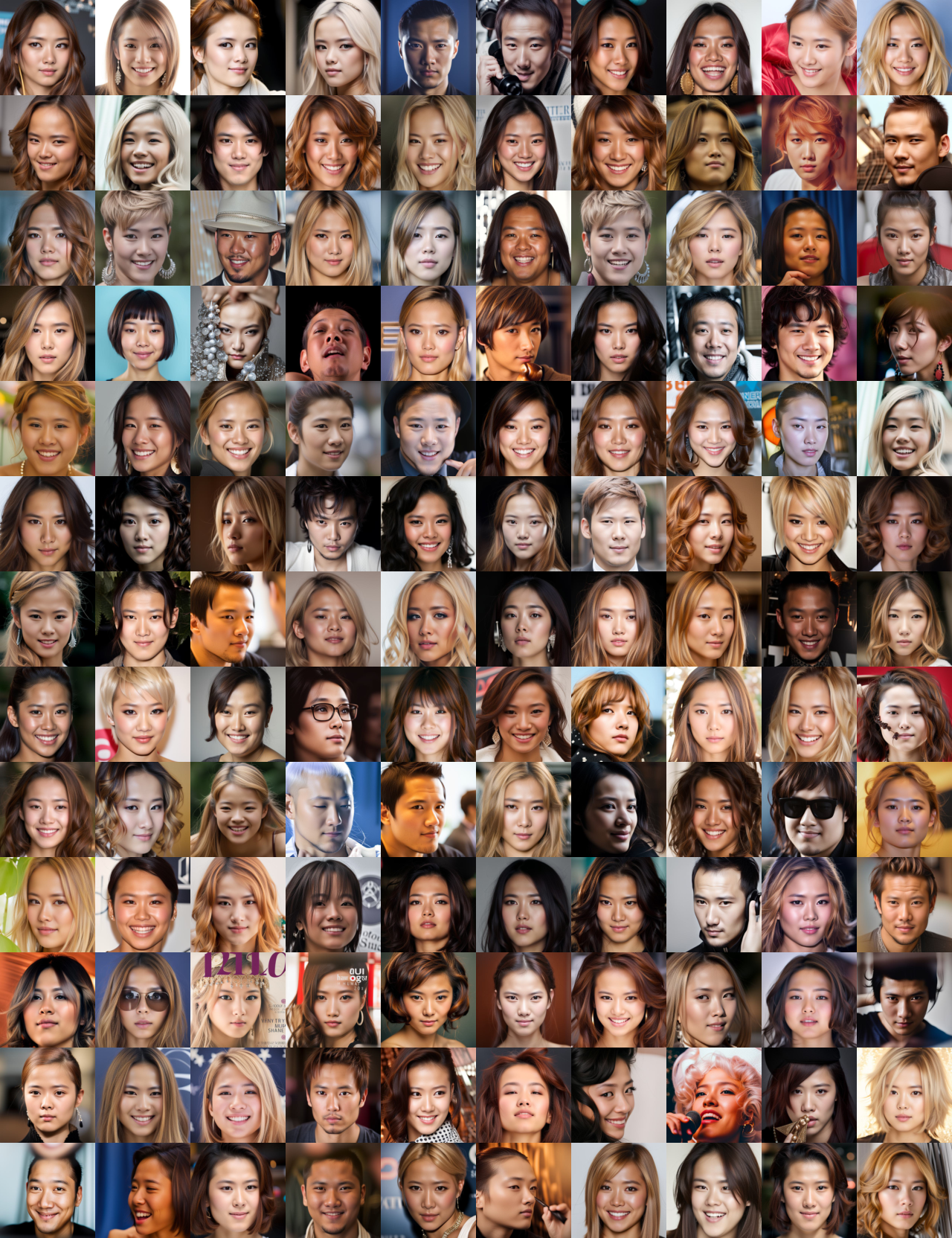}
    \caption{Examples of generated images for the young\&Asian user group. These samples were generated by passing images from the CelebA-HQ test partition~\cite{karras2018progressive} through the SDXL image-to-image model. The text instruction used was \code{``20 years old human, Asian, natural image, sharp, DSLR''}. The FairFace ethnicity and age classifier~\cite{karkkainenfairface} categorizes all of these images as belonging to either the Southeast Asian or East Asian ethnicities, and to any age group younger than 70 years old.}
    \label{fig:not_old_asian}
\end{figure}
\begin{figure}
    \centering
    \includegraphics[width=1\textwidth]{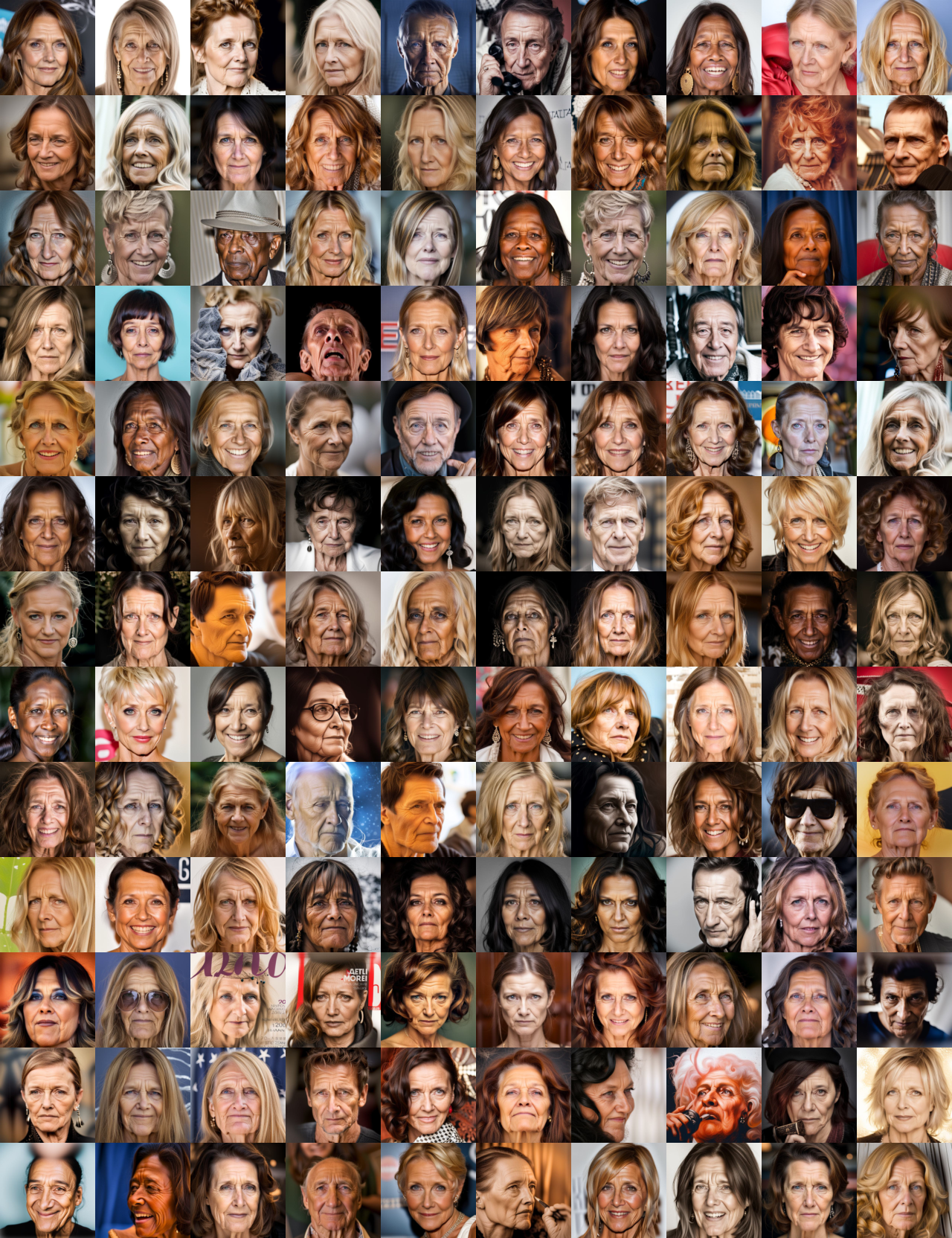}
    \caption{Examples of generated images for the old\&non-Asian user group. These samples were generated by passing images from the CelebA-HQ test partition~\cite{karras2018progressive} through the SDXL image-to-image model. The text instruction used was \code{``120 years old human, natural image, sharp, DSLR''}. The FairFace ethnicity and age classifier~\cite{karkkainenfairface} categorizes all of these images as belonging to ethnicities other than Southeast Asian or East Asian, and to the 70+ age group.}
    \label{fig:old_not_asian}
\end{figure}
\begin{figure}
    \centering
    \includegraphics[width=1\textwidth]{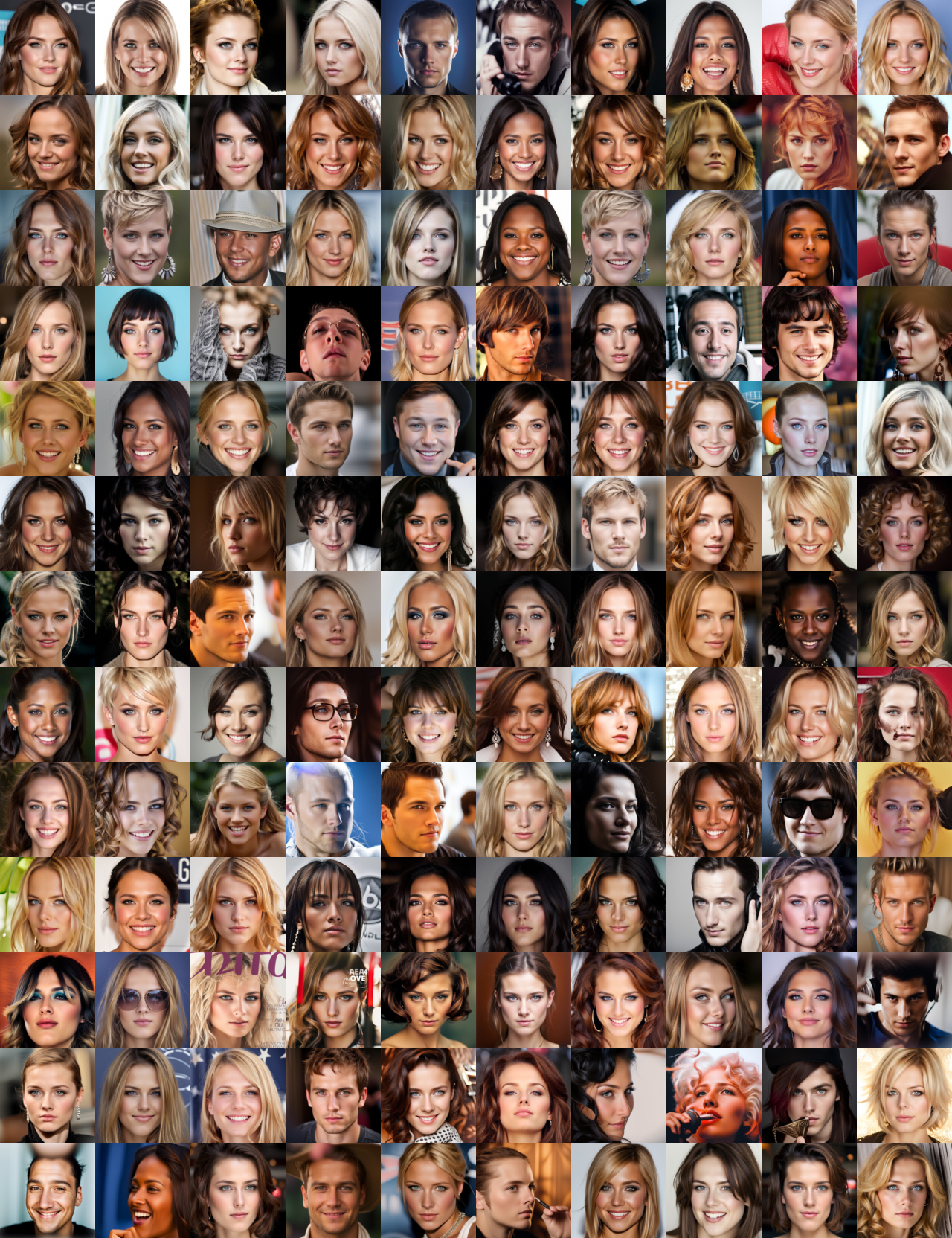}
    \caption{Examples of generated images for the young\&non-Asian user group. These samples were generated by passing images from the CelebA-HQ test partition~\cite{karras2018progressive} through the SDXL image-to-image model. The text instruction used was \code{``20 years old human, natural image, sharp, DSLR''}. The FairFace ethnicity and age classifier~\cite{karkkainenfairface} categorizes all of these images as belonging to ethnicities other than Southeast Asian or East Asian, and to any age group younger than 70 years old.}
    \label{fig:not_old_not_asian}
\end{figure}
\begin{figure}
    \centering
    \includegraphics[width=1\textwidth]{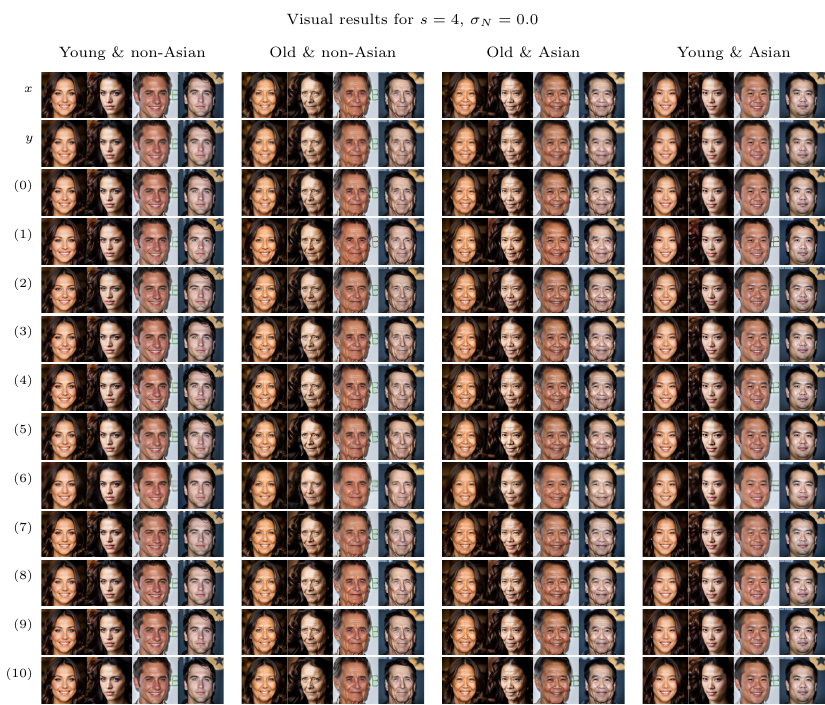}
    \caption{Face image super-resolution for each fairness group, where $s=4,\sigma_{N}=0$. (0) DDRM, (1) VQFR, (2) CodeFormer, (3) $\text{DDNM}^{+}$, (4) $\text{RestoreFormer}++$, (5) GPEN, (6) DPS, (7) GFPGAN, (8) PiGDM, (9) RestoreFormer, (10) DiffBIR. \textbf{Zoom in for best view}.}
    \label{fig:s=4-n=0}
\end{figure}
\begin{figure}
    \centering
    \includegraphics[width=1\textwidth]{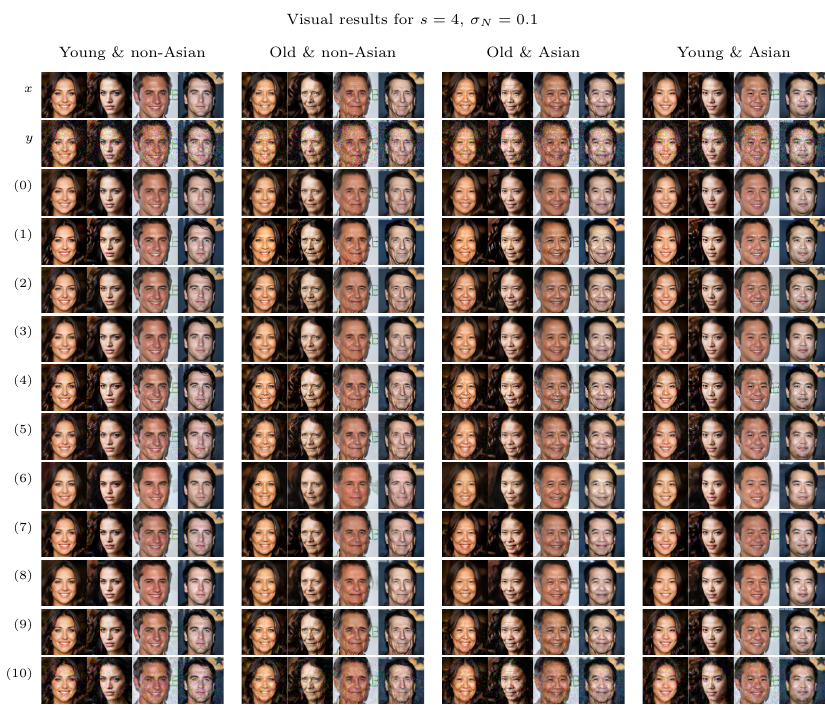}
    \caption{Face image super-resolution for each fairness group, where $s=4,\sigma_{N}=0.1$. (0) DDRM, (1) VQFR, (2) CodeFormer, (3) $\text{DDNM}^{+}$, (4) $\text{RestoreFormer}++$, (5) GPEN, (6) DPS, (7) GFPGAN, (8) PiGDM, (9) RestoreFormer, (10) DiffBIR. \textbf{Zoom in for best view}.}
    \label{fig:s=4-n=1}
\end{figure}
\begin{figure}
    \centering
    \includegraphics[width=1\textwidth]{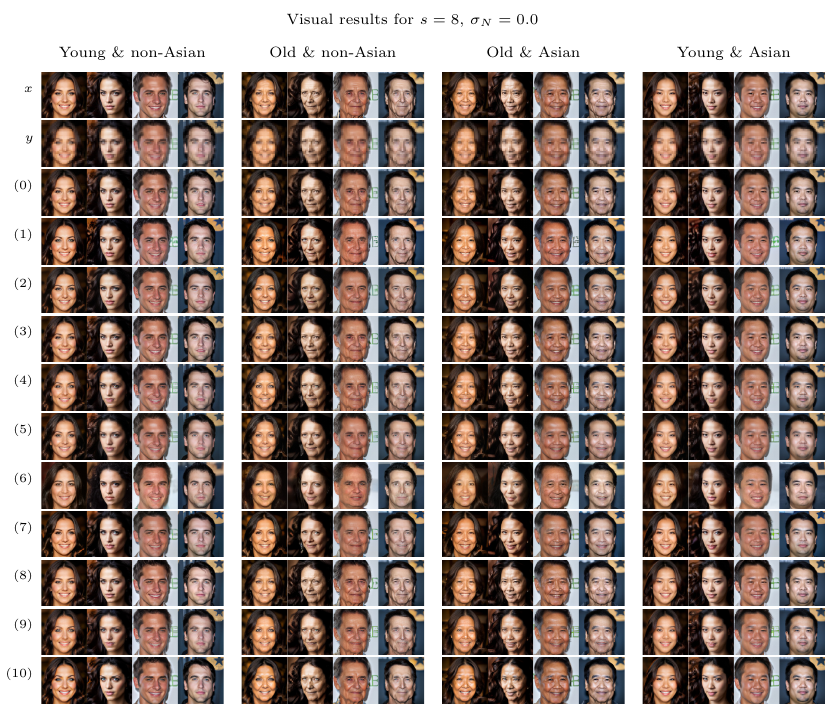}
    \caption{Face image super-resolution for each fairness group, where $s=8,\sigma_{N}=0$. (0) DDRM, (1) VQFR, (2) CodeFormer, (3) $\text{DDNM}^{+}$, (4) $\text{RestoreFormer}++$, (5) GPEN, (6) DPS, (7) GFPGAN, (8) PiGDM, (9) RestoreFormer, (10) DiffBIR. \textbf{Zoom in for best view}.}
    \label{fig:s=8-n=0}
\end{figure}
\begin{figure}
    \centering
    \includegraphics[width=1\textwidth]{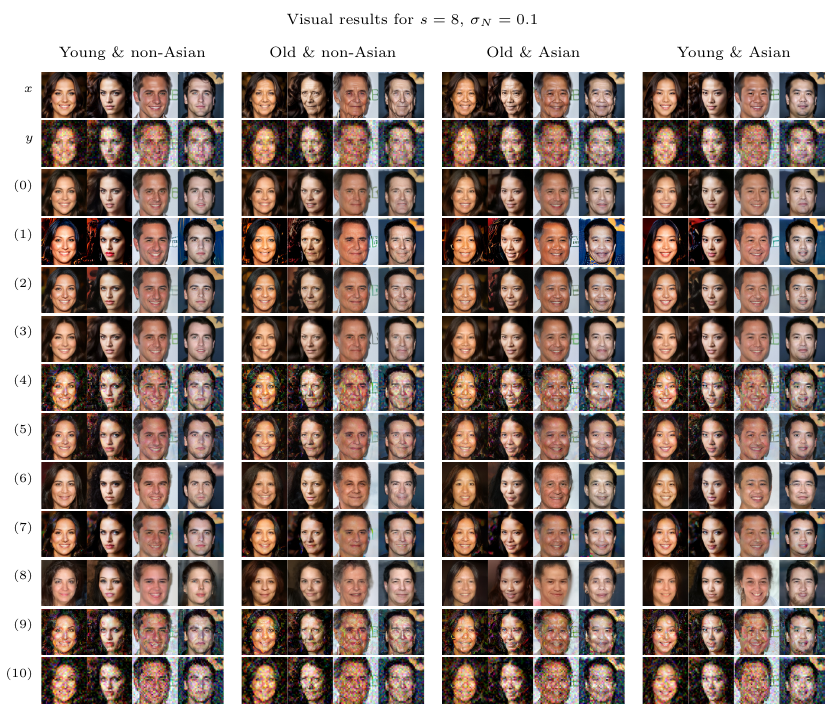}
    \caption{Face image super-resolution for each fairness group, where $s=8,\sigma_{N}=0.1$. (0) DDRM, (1) VQFR, (2) CodeFormer, (3) $\text{DDNM}^{+}$, (4) $\text{RestoreFormer}++$, (5) GPEN, (6) DPS, (7) GFPGAN, (8) PiGDM, (9) RestoreFormer, (10) DiffBIR. \textbf{Zoom in for best view}.}
    \label{fig:s=8-n=1}
\end{figure}
\begin{figure}
    \centering
    \includegraphics[width=1\textwidth]{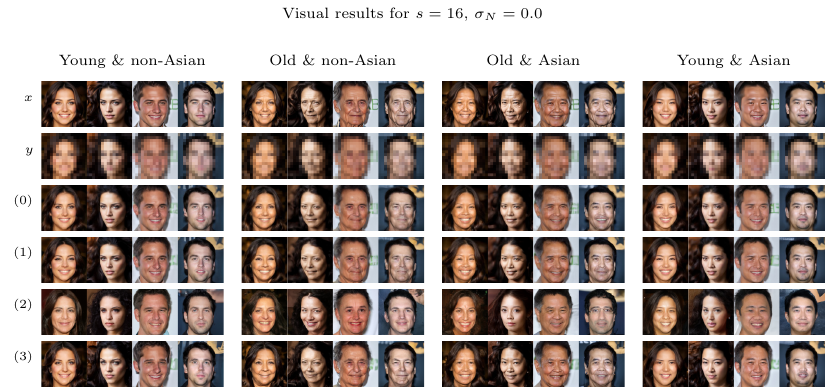}
    \caption{Face image super-resolution for each fairness group, where $s=16,\sigma_{N}=0$. (0) DDRM, (1) $\text{DDNM}^{+}$, (2) DPS, (3) PiGDM. \textbf{Zoom in for best view}.}
    \label{fig:s=16-n=0}
\end{figure}
\begin{figure}
    \centering
    \includegraphics[width=1\textwidth]{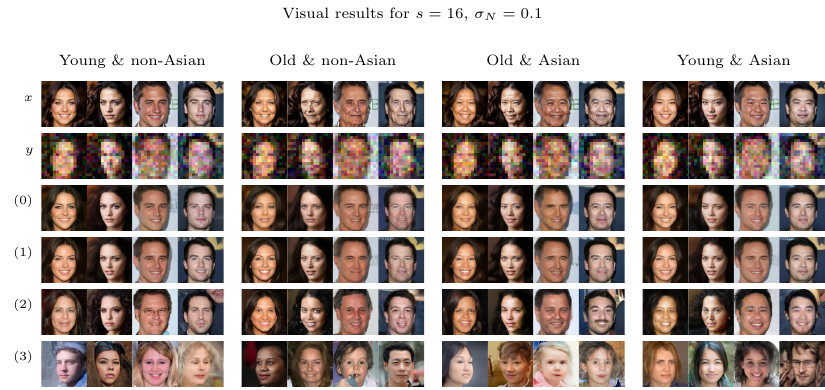}
    \caption{Face image super-resolution for each fairness group, where $s=16,\sigma_{N}=0.1$. (0) DDRM, (1) $\text{DDNM}^{+}$, (2) DPS, (3) PiGDM. \textbf{Zoom in for best view}.}
    \label{fig:s=16-n=1}
\end{figure}
\begin{figure}
    \centering
    \includegraphics[width=1\textwidth]{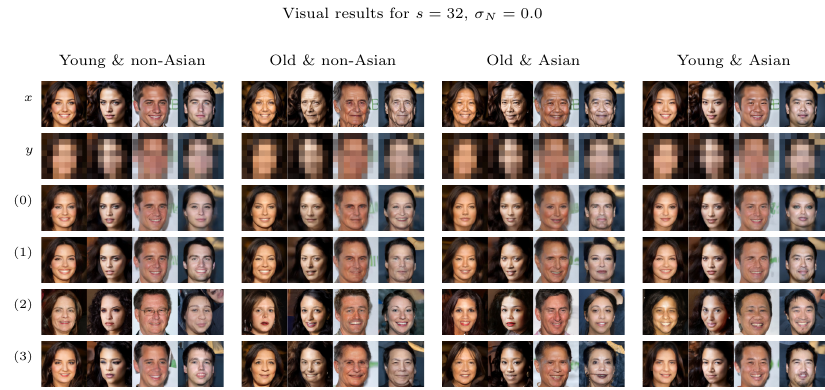}
    \caption{Face image super-resolution for each fairness group, where $s=32,\sigma_{N}=0$. (0) DDRM, (1) $\text{DDNM}^{+}$, (2) DPS, (3) PiGDM. \textbf{Zoom in for best view}.}
    \label{fig:s=32-n=0}
\end{figure}
\begin{figure}
    \centering
    \includegraphics[width=1\textwidth]{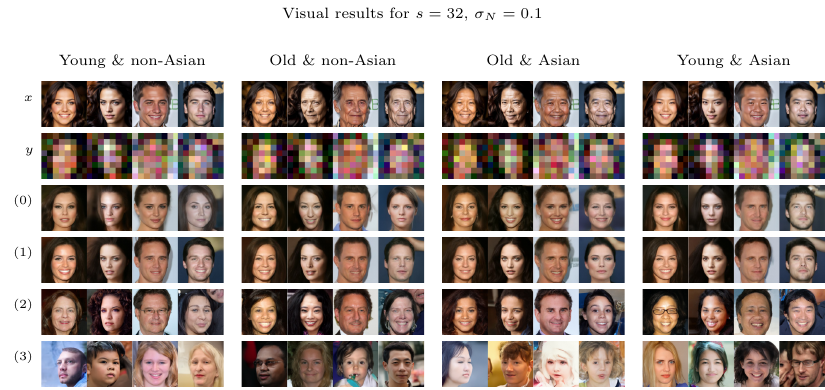}
    \caption{Face image super-resolution for each fairness group, where $s=32,\sigma_{N}=0.1$. (0) DDRM, (1) $\text{DDNM}^{+}$, (2) DPS, (3) PiGDM. \textbf{Zoom in for best view}.}
    \label{fig:s=32-n=1}
\end{figure}

\begin{figure}
    \centering
    \includegraphics[width=1\textwidth]{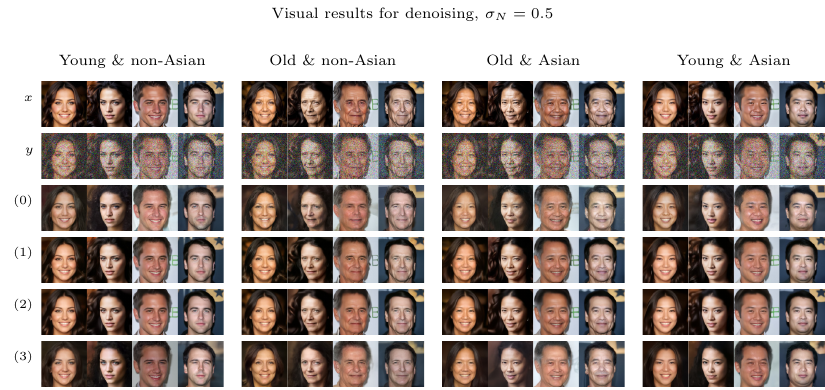}
    \caption{Face image denoising for each fairness group, where $\sigma_{N}=0.5$. (0) DDRM, (1) $\text{DDNM}^{+}$, (2) DPS, (3) PiGDM. \textbf{Zoom in for best view}.}
    \label{fig:denoising}
\end{figure}

\begin{figure}
    \centering
    \includegraphics[width=1\textwidth]{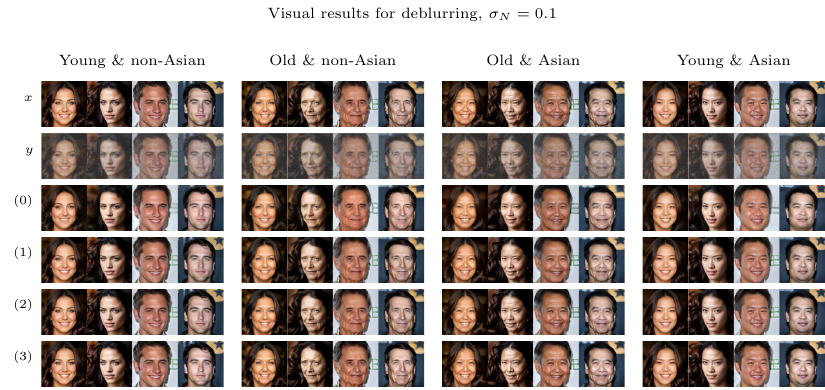}
    \caption{Face image deblurring for each fairness group, where $\sigma_{N}=0.1$. (0) DDRM, (1) $\text{DDNM}^{+}$, (2) DPS, (3) PiGDM. \textbf{Zoom in for best view}.}
    \label{fig:deblurring-01}
\end{figure}
\begin{figure}
    \centering
    \includegraphics[width=1\textwidth]{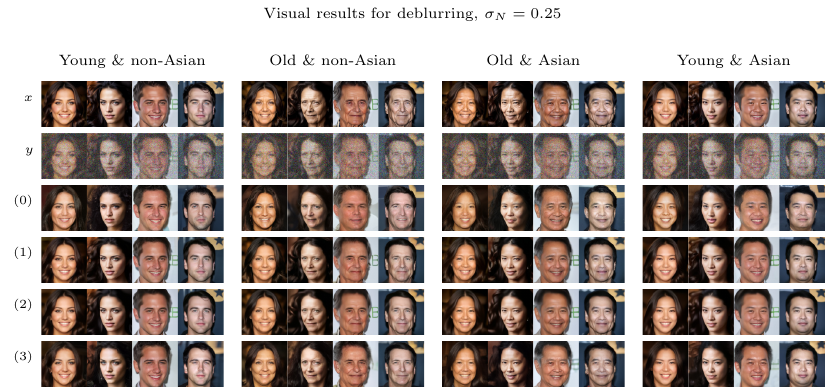}
    \caption{Face image deblurring for each fairness group, where $\sigma_{N}=0.25$. (0) DDRM, (1) $\text{DDNM}^{+}$, (2) DPS, (3) PiGDM. \textbf{Zoom in for best view}.}
    \label{fig:deblurring-025}
\end{figure}
\begin{figure}
    \centering
    \includegraphics[width=1\textwidth]{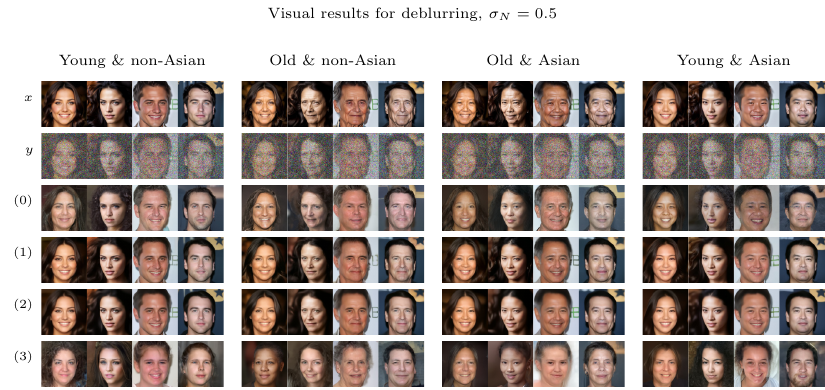}
    \caption{Face image deblurring for each fairness group, where $\sigma_{N}=0.5$. (0) DDRM, (1) $\text{DDNM}^{+}$, (2) DPS, (3) PiGDM. \textbf{Zoom in for best view}.}
    \label{fig:deblurring-05}
\end{figure}

\newpage
\clearpage
\section*{NeurIPS Paper Checklist}

\begin{enumerate}

\item {\bf Claims}
    \item[] Question: Do the main claims made in the abstract and introduction accurately reflect the paper's contributions and scope?
    \item[] Answer: \answerYes{} 
    \item[] Justification: We believe our paper's contributions and scope is accurately reflected in the abstract and in the introduction.
    \item[] Guidelines:
    \begin{itemize}
        \item The answer NA means that the abstract and introduction do not include the claims made in the paper.
        \item The abstract and/or introduction should clearly state the claims made, including the contributions made in the paper and important assumptions and limitations. A No or NA answer to this question will not be perceived well by the reviewers. 
        \item The claims made should match theoretical and experimental results, and reflect how much the results can be expected to generalize to other settings. 
        \item It is fine to include aspirational goals as motivation as long as it is clear that these goals are not attained by the paper. 
    \end{itemize}

\item {\bf Limitations}
    \item[] Question: Does the paper discuss the limitations of the work performed by the authors?
    \item[] Answer: \answerYes{} 
    \item[] Justification: We discuss the limitations of our work in~\cref{section:discussion}.
    \item[] Guidelines:
    \begin{itemize}
        \item The answer NA means that the paper has no limitation while the answer No means that the paper has limitations, but those are not discussed in the paper. 
        \item The authors are encouraged to create a separate "Limitations" section in their paper.
        \item The paper should point out any strong assumptions and how robust the results are to violations of these assumptions (e.g., independence assumptions, noiseless settings, model well-specification, asymptotic approximations only holding locally). The authors should reflect on how these assumptions might be violated in practice and what the implications would be.
        \item The authors should reflect on the scope of the claims made, e.g., if the approach was only tested on a few datasets or with a few runs. In general, empirical results often depend on implicit assumptions, which should be articulated.
        \item The authors should reflect on the factors that influence the performance of the approach. For example, a facial recognition algorithm may perform poorly when image resolution is low or images are taken in low lighting. Or a speech-to-text system might not be used reliably to provide closed captions for online lectures because it fails to handle technical jargon.
        \item The authors should discuss the computational efficiency of the proposed algorithms and how they scale with dataset size.
        \item If applicable, the authors should discuss possible limitations of their approach to address problems of privacy and fairness.
        \item While the authors might fear that complete honesty about limitations might be used by reviewers as grounds for rejection, a worse outcome might be that reviewers discover limitations that aren't acknowledged in the paper. The authors should use their best judgment and recognize that individual actions in favor of transparency play an important role in developing norms that preserve the integrity of the community. Reviewers will be specifically instructed to not penalize honesty concerning limitations.
    \end{itemize}

\item {\bf Theory Assumptions and Proofs}
    \item[] Question: For each theoretical result, does the paper provide the full set of assumptions and a complete (and correct) proof?
    \item[] Answer: \answerYes{} 
    \item[] Justification: We provide 4 theorems in our paper (\cref{theorem:disjoint,theorem:gpibound,corollary:pfi-pi-tradeoff,theorem:hitratebound}), and we state the full set of assumptions in each of them.
    We rigorously prove our results in~\cref{appendix:proof-disjoint,appendix:proof-gpibound,appendix:proof-hitrate,appendix:proof-tradeoff}.
    \item[] Guidelines:
    \begin{itemize}
        \item The answer NA means that the paper does not include theoretical results. 
        \item All the theorems, formulas, and proofs in the paper should be numbered and cross-referenced.
        \item All assumptions should be clearly stated or referenced in the statement of any theorems.
        \item The proofs can either appear in the main paper or the supplemental material, but if they appear in the supplemental material, the authors are encouraged to provide a short proof sketch to provide intuition. 
        \item Inversely, any informal proof provided in the core of the paper should be complemented by formal proofs provided in appendix or supplemental material.
        \item Theorems and Lemmas that the proof relies upon should be properly referenced. 
    \end{itemize}

    \item {\bf Experimental Result Reproducibility}
    \item[] Question: Does the paper fully disclose all the information needed to reproduce the main experimental results of the paper to the extent that it affects the main claims and/or conclusions of the paper (regardless of whether the code and data are provided or not)?
    \item[] Answer: \answerYes{} 
    \item[] Justification: Our experiments involve evaluating existing face image super-resolution algorithms (using their official code and checkpoints) and generating synthetic image datasets. We carefully detail the evaluation procedures for the algorithms and the data generation process in both the paper and the appendix.
    \item[] Guidelines:
    \begin{itemize}
        \item The answer NA means that the paper does not include experiments.
        \item If the paper includes experiments, a No answer to this question will not be perceived well by the reviewers: Making the paper reproducible is important, regardless of whether the code and data are provided or not.
        \item If the contribution is a dataset and/or model, the authors should describe the steps taken to make their results reproducible or verifiable. 
        \item Depending on the contribution, reproducibility can be accomplished in various ways. For example, if the contribution is a novel architecture, describing the architecture fully might suffice, or if the contribution is a specific model and empirical evaluation, it may be necessary to either make it possible for others to replicate the model with the same dataset, or provide access to the model. In general. releasing code and data is often one good way to accomplish this, but reproducibility can also be provided via detailed instructions for how to replicate the results, access to a hosted model (e.g., in the case of a large language model), releasing of a model checkpoint, or other means that are appropriate to the research performed.
        \item While NeurIPS does not require releasing code, the conference does require all submissions to provide some reasonable avenue for reproducibility, which may depend on the nature of the contribution. For example
        \begin{enumerate}
            \item If the contribution is primarily a new algorithm, the paper should make it clear how to reproduce that algorithm.
            \item If the contribution is primarily a new model architecture, the paper should describe the architecture clearly and fully.
            \item If the contribution is a new model (e.g., a large language model), then there should either be a way to access this model for reproducing the results or a way to reproduce the model (e.g., with an open-source dataset or instructions for how to construct the dataset).
            \item We recognize that reproducibility may be tricky in some cases, in which case authors are welcome to describe the particular way they provide for reproducibility. In the case of closed-source models, it may be that access to the model is limited in some way (e.g., to registered users), but it should be possible for other researchers to have some path to reproducing or verifying the results.
        \end{enumerate}
    \end{itemize}

\item {\bf Open access to data and code}
    \item[] Question: Does the paper provide open access to the data and code, with sufficient instructions to faithfully reproduce the main experimental results, as described in supplemental material?
    \item[] Answer: \answerNo{} 
    \item[] Justification: We evaluate existing face image super-resolution algorithms using their official codes and checkpoints. We employ well-known metrics like KID, FID, and PSNR, leveraging the \code{torch-fidelity} and \code{piq} packages for their calculation (all the details are in the appendix). To avoid potential licensing issues, we refrain from publicly sharing the evaluation datasets, but we provide a thorough explanation of their construction process.
    \item[] Guidelines:
    \begin{itemize}
        \item The answer NA means that paper does not include experiments requiring code.
        \item Please see the NeurIPS code and data submission guidelines (\url{https://nips.cc/public/guides/CodeSubmissionPolicy}) for more details.
        \item While we encourage the release of code and data, we understand that this might not be possible, so “No” is an acceptable answer. Papers cannot be rejected simply for not including code, unless this is central to the contribution (e.g., for a new open-source benchmark).
        \item The instructions should contain the exact command and environment needed to run to reproduce the results. See the NeurIPS code and data submission guidelines (\url{https://nips.cc/public/guides/CodeSubmissionPolicy}) for more details.
        \item The authors should provide instructions on data access and preparation, including how to access the raw data, preprocessed data, intermediate data, and generated data, etc.
        \item The authors should provide scripts to reproduce all experimental results for the new proposed method and baselines. If only a subset of experiments are reproducible, they should state which ones are omitted from the script and why.
        \item At submission time, to preserve anonymity, the authors should release anonymized versions (if applicable).
        \item Providing as much information as possible in supplemental material (appended to the paper) is recommended, but including URLs to data and code is permitted.
    \end{itemize}

\item {\bf Experimental Setting/Details}
    \item[] Question: Does the paper specify all the training and test details (e.g., data splits, hyperparameters, how they were chosen, type of optimizer, etc.) necessary to understand the results?
    \item[] Answer: \answerYes{} 
    \item[] Justification: Our evaluation involves existing face image super-resolution algorithms, leveraging their official code, checkpoints, and hyper-parameters provided by the authors. We do not optimize these algorithms within this work. However, we do conduct adversarial attacks, which require optimization. We disclose the hyper-parameters used in such experiments.
    \item[] Guidelines:
    \begin{itemize}
        \item The answer NA means that the paper does not include experiments.
        \item The experimental setting should be presented in the core of the paper to a level of detail that is necessary to appreciate the results and make sense of them.
        \item The full details can be provided either with the code, in appendix, or as supplemental material.
    \end{itemize}

\item {\bf Experiment Statistical Significance}
    \item[] Question: Does the paper report error bars suitably and correctly defined or other appropriate information about the statistical significance of the experiments?
    \item[] Answer:  \answerNo{} 
    \item[] Justification:  We report results averaged over 1,356 images. For the metrics we evaluate (KID, PSNR, \etc), such a large number of images eliminates the need for error bars.
    \item[] Guidelines:
    \begin{itemize}
        \item The answer NA means that the paper does not include experiments.
        \item The authors should answer "Yes" if the results are accompanied by error bars, confidence intervals, or statistical significance tests, at least for the experiments that support the main claims of the paper.
        \item The factors of variability that the error bars are capturing should be clearly stated (for example, train/test split, initialization, random drawing of some parameter, or overall run with given experimental conditions).
        \item The method for calculating the error bars should be explained (closed form formula, call to a library function, bootstrap, etc.)
        \item The assumptions made should be given (e.g., Normally distributed errors).
        \item It should be clear whether the error bar is the standard deviation or the standard error of the mean.
        \item It is OK to report 1-sigma error bars, but one should state it. The authors should preferably report a 2-sigma error bar than state that they have a 96\% CI, if the hypothesis of Normality of errors is not verified.
        \item For asymmetric distributions, the authors should be careful not to show in tables or figures symmetric error bars that would yield results that are out of range (e.g. negative error rates).
        \item If error bars are reported in tables or plots, The authors should explain in the text how they were calculated and reference the corresponding figures or tables in the text.
    \end{itemize}

\item {\bf Experiments Compute Resources}
    \item[] Question: For each experiment, does the paper provide sufficient information on the computer resources (type of compute workers, memory, time of execution) needed to reproduce the experiments?
    \item[] Answer: \answerYes{} 
    \item[] Justification: In the appendices.
    \item[] Guidelines:
    \begin{itemize}
        \item The answer NA means that the paper does not include experiments.
        \item The paper should indicate the type of compute workers CPU or GPU, internal cluster, or cloud provider, including relevant memory and storage.
        \item The paper should provide the amount of compute required for each of the individual experimental runs as well as estimate the total compute. 
        \item The paper should disclose whether the full research project required more compute than the experiments reported in the paper (e.g., preliminary or failed experiments that didn't make it into the paper). 
    \end{itemize}
    
\item {\bf Code Of Ethics}
    \item[] Question: Does the research conducted in the paper conform, in every respect, with the NeurIPS Code of Ethics \url{https://neurips.cc/public/EthicsGuidelines}?
    \item[] Answer: \answerYes{} 
    \item[] Justification: The paper conforms with the NeurIPS Code of Ethics in every aspect.
    \item[] Guidelines:
    \begin{itemize}
        \item The answer NA means that the authors have not reviewed the NeurIPS Code of Ethics.
        \item If the authors answer No, they should explain the special circumstances that require a deviation from the Code of Ethics.
        \item The authors should make sure to preserve anonymity (e.g., if there is a special consideration due to laws or regulations in their jurisdiction).
    \end{itemize}

\item {\bf Broader Impacts}
    \item[] Question: Does the paper discuss both potential positive societal impacts and negative societal impacts of the work performed?
    \item[] Answer: \answerYes{} 
    \item[] Justification: We dedicate~\cref{section:societal-impact} to discuss the societal impacts of our paper.
    \item[] Guidelines:
    \begin{itemize}
        \item The answer NA means that there is no societal impact of the work performed.
        \item If the authors answer NA or No, they should explain why their work has no societal impact or why the paper does not address societal impact.
        \item Examples of negative societal impacts include potential malicious or unintended uses (e.g., disinformation, generating fake profiles, surveillance), fairness considerations (e.g., deployment of technologies that could make decisions that unfairly impact specific groups), privacy considerations, and security considerations.
        \item The conference expects that many papers will be foundational research and not tied to particular applications, let alone deployments. However, if there is a direct path to any negative applications, the authors should point it out. For example, it is legitimate to point out that an improvement in the quality of generative models could be used to generate deepfakes for disinformation. On the other hand, it is not needed to point out that a generic algorithm for optimizing neural networks could enable people to train models that generate Deepfakes faster.
        \item The authors should consider possible harms that could arise when the technology is being used as intended and functioning correctly, harms that could arise when the technology is being used as intended but gives incorrect results, and harms following from (intentional or unintentional) misuse of the technology.
        \item If there are negative societal impacts, the authors could also discuss possible mitigation strategies (e.g., gated release of models, providing defenses in addition to attacks, mechanisms for monitoring misuse, mechanisms to monitor how a system learns from feedback over time, improving the efficiency and accessibility of ML).
    \end{itemize}
    
\item {\bf Safeguards}
    \item[] Question: Does the paper describe safeguards that have been put in place for responsible release of data or models that have a high risk for misuse (e.g., pretrained language models, image generators, or scraped datasets)?
    \item[] Answer: \answerNA{} 
    \item[] Justification: We do not release data or models. The paper poses no such risks.
    \item[] Guidelines:
    \begin{itemize}
        \item The answer NA means that the paper poses no such risks.
        \item Released models that have a high risk for misuse or dual-use should be released with necessary safeguards to allow for controlled use of the model, for example by requiring that users adhere to usage guidelines or restrictions to access the model or implementing safety filters. 
        \item Datasets that have been scraped from the Internet could pose safety risks. The authors should describe how they avoided releasing unsafe images.
        \item We recognize that providing effective safeguards is challenging, and many papers do not require this, but we encourage authors to take this into account and make a best faith effort.
    \end{itemize}

\item {\bf Licenses for existing assets}
    \item[] Question: Are the creators or original owners of assets (e.g., code, data, models), used in the paper, properly credited and are the license and terms of use explicitly mentioned and properly respected?
    \item[] Answer: \answerYes{} 
    \item[] Justification: We cite the use of publicly available datasets and conform to their license.
    \item[] Guidelines:
    \begin{itemize}
        \item The answer NA means that the paper does not use existing assets.
        \item The authors should cite the original paper that produced the code package or dataset.
        \item The authors should state which version of the asset is used and, if possible, include a URL.
        \item The name of the license (e.g., CC-BY 4.0) should be included for each asset.
        \item For scraped data from a particular source (e.g., website), the copyright and terms of service of that source should be provided.
        \item If assets are released, the license, copyright information, and terms of use in the package should be provided. For popular datasets, \url{paperswithcode.com/datasets} has curated licenses for some datasets. Their licensing guide can help determine the license of a dataset.
        \item For existing datasets that are re-packaged, both the original license and the license of the derived asset (if it has changed) should be provided.
        \item If this information is not available online, the authors are encouraged to reach out to the asset's creators.
    \end{itemize}

\item {\bf New Assets}
    \item[] Question: Are new assets introduced in the paper well documented and is the documentation provided alongside the assets?
    \item[] Answer: \answerNA{} 
    \item[] Justification: The paper does not release new assets.
    \item[] Guidelines:
    \begin{itemize}
        \item The answer NA means that the paper does not release new assets.
        \item Researchers should communicate the details of the dataset/code/model as part of their submissions via structured templates. This includes details about training, license, limitations, etc. 
        \item The paper should discuss whether and how consent was obtained from people whose asset is used.
        \item At submission time, remember to anonymize your assets (if applicable). You can either create an anonymized URL or include an anonymized zip file.
    \end{itemize}

\item {\bf Crowdsourcing and Research with Human Subjects}
    \item[] Question: For crowdsourcing experiments and research with human subjects, does the paper include the full text of instructions given to participants and screenshots, if applicable, as well as details about compensation (if any)? 
    \item[] Answer: \answerNA{} 
    \item[] Justification: The paper does not involve crowdsourcing nor research with human subjects.
    \item[] Guidelines:
    \begin{itemize}
        \item The answer NA means that the paper does not involve crowdsourcing nor research with human subjects.
        \item Including this information in the supplemental material is fine, but if the main contribution of the paper involves human subjects, then as much detail as possible should be included in the main paper. 
        \item According to the NeurIPS Code of Ethics, workers involved in data collection, curation, or other labor should be paid at least the minimum wage in the country of the data collector. 
    \end{itemize}

\item {\bf Institutional Review Board (IRB) Approvals or Equivalent for Research with Human Subjects}
    \item[] Question: Does the paper describe potential risks incurred by study participants, whether such risks were disclosed to the subjects, and whether Institutional Review Board (IRB) approvals (or an equivalent approval/review based on the requirements of your country or institution) were obtained?
    \item[] Answer: \answerNA{} 
    \item[] Justification: The paper does not involve crowdsourcing nor research with human subjects.
    \item[] Guidelines:
    \begin{itemize}
        \item The answer NA means that the paper does not involve crowdsourcing nor research with human subjects.
        \item Depending on the country in which research is conducted, IRB approval (or equivalent) may be required for any human subjects research. If you obtained IRB approval, you should clearly state this in the paper. 
        \item We recognize that the procedures for this may vary significantly between institutions and locations, and we expect authors to adhere to the NeurIPS Code of Ethics and the guidelines for their institution. 
        \item For initial submissions, do not include any information that would break anonymity (if applicable), such as the institution conducting the review.
    \end{itemize}

\end{enumerate}

\end{document}